\documentclass[letterpaper,twocolumn,10pt]{article}
\usepackage{usenix2019_v3}

\usepackage{amsmath,amssymb,amsfonts,amsthm}
\usepackage{graphicx}
\usepackage{booktabs}
\usepackage{tabularx}
\usepackage{array}
\usepackage{float}
\usepackage{listings}
\usepackage{algorithm}
\usepackage{algpseudocode}
\usepackage{subcaption}
\usepackage{xspace}
\usepackage{tikz}
\usetikzlibrary{arrows.meta, positioning, calc, shapes.geometric, fit, shadows, matrix}

\microtypesetup{spacing=false}

\AtBeginDocument{\DeclareMathAlphabet{\mathcal}{OMS}{cmsy}{m}{n}}

\newcommand{\PCI}{\textnormal{\textsc{PCI}}\xspace}
\newcommand{\CK}{\textnormal{\textsc{CK}}\xspace}

\newtheorem{proposition}{Proposition}

\newcolumntype{L}[1]{>{\raggedright\arraybackslash}p{#1}}
\newcolumntype{C}[1]{>{\centering\arraybackslash}p{#1}}
\newcolumntype{R}[1]{>{\raggedleft\arraybackslash}p{#1}}

\definecolor{navy}{RGB}{20,50,110}
\definecolor{crimson}{RGB}{180,30,30}
\definecolor{slate}{RGB}{112,128,144}
\definecolor{emerald}{RGB}{34,139,34}
\definecolor{darkpurple}{RGB}{102,51,153}
\definecolor{sublayer}{RGB}{235,240,250}
\definecolor{kernellayer}{RGB}{250,235,235}

\hypersetup{
  pdftitle={Beyond Memory: A Transactional Continuity Kernel for Long-Lived AI Agents},
  pdfauthor={Jun He; Deying Yu},
  pdfsubject={Persistent Agent State, State Transition Protocols, Optimistic Concurrency, Atomic Commit},
  pdfkeywords={persistent agent state, continuity kernel, state transition protocol, exact-head concurrency, writer fencing, receipts}
}

\begin{document}

\title{\bf Beyond Memory: A Transactional Continuity Kernel for Long-Lived AI Agents}

\author{
  {\rm Jun He}\\
  OpenKedge.io
  \and
  {\rm Deying Yu}\\
  OpenKedge.io
}

\maketitle

\begin{abstract}
Persistent AI agents accumulate versioned state across long horizons, but storage retention alone does not identify authoritative state. Without an explicit control plane, unmediated updates by models, tools, and background workers risk stale overwrites, un-audited exposures, and self-authorizing privilege escalation. We argue that agent state governance is an infrastructural activation problem, defining \emph{continuity} as an unbroken, authorized lineage of accepted branch heads. We present the \emph{Continuity Kernel} (\CK), an activation contract that decouples off-commit candidate evaluation from atomic state activation. Untrusted components propose typed changes against an exact predecessor head or typed absence. A short activation transaction revalidates ownership, pre-state authority, freshness, and effect uniqueness, recording one stable disposition (\textsf{Commit}, \textsf{Reject}, \textsf{Quarantine}, or \textsf{Defer}). Only \textsf{Commit} atomically advances the branch head and installs the complete accepted unit (state, authority, lineage, effects, outcome, and receipt). A bounded executable model verifies the protocol across 2,808,230 reachable states and 5,526,474 state-changing transitions with zero invariant violations.
\end{abstract}

\begin{figure*}[t]
\centering
\definecolor{navy}{RGB}{0, 32, 96}
\definecolor{sublayer}{RGB}{235, 243, 250}
\definecolor{crimson}{RGB}{192, 0, 0}
\definecolor{kernellayer}{RGB}{253, 235, 235}
\definecolor{darkpurple}{RGB}{84, 34, 117}

\begin{tikzpicture}[
  node distance=1.1cm and 0.8cm,
  base/.style={
    rounded corners=3pt,
    minimum height=1cm,
    text width=2.8cm,
    align=center,
    thick,
    font=\small\sffamily
  },
  box/.style={base, draw=navy, fill=sublayer},
  kernel/.style={base, draw=crimson, fill=kernellayer},
  store/.style={base, draw=darkpurple, fill=white},
  arr/.style={-{Stealth[length=2.5mm, width=1.5mm]}, thick, draw=navy!80!black},
  kernel-arr/.style={-{Stealth[length=2.5mm, width=1.5mm]}, thick, draw=crimson},
  feedback-arr/.style={-{Stealth[length=2.5mm, width=1.5mm]}, thick, dashed, draw=darkpurple}
]

\node[box] (prop) {Models, tools, operators\\[0.5ex]untrusted proposers};
\node[box, right=of prop] (prep) {Authenticate proposal\\[0.5ex]and identifier};
\node[box, right=of prep] (eval) {Acquire evidence\\[0.5ex]and seal candidate};
\node[kernel, right=of eval] (activate) {Revalidate and\\[0.5ex]decide atomically};

\node[store, above right=-0.1cm and 1cm of activate] (state) {Authoritative state\\[0.5ex]and branch head};
\node[store, below right=-0.1cm and 1cm of activate] (attempt) {Outcome, receipt,\\[0.5ex]and effect index};

\draw[arr] (prop) -- (prep);
\draw[arr] (prep) -- (eval);
\draw[arr] (eval) -- (activate);

\draw[kernel-arr] (activate.east) to[out=0, in=180] (state.west);
\draw[kernel-arr] (activate.east) to[out=0, in=180] (attempt.west);

\draw[feedback-arr] (state.north) to[out=150, in=60] (activate.north);

\end{tikzpicture}
\caption{Probabilistic proposal generation and remote evaluation precede the
short activation transaction. Only activation advances an authoritative head.}
\label{fig:minimal-architecture}
\end{figure*}

\section{Introduction}
\label{sec:introduction}

Long-lived AI agents increasingly retain memories, profiles, plans, tool state,
and policies beyond a single context window~\cite{park2023generative,packer2023memgpt}.
However, high retrieval quality does not determine which state version is authoritative
when models, tools, compactors, and background recovery workers submit concurrent or conflicting updates.
Without a strict mutation boundary, a storage layer may preserve every version yet accept a stale retry,
expose uncommitted state without audit trails, or allow a proposal to authorize itself by injecting the required privileges into its own proposed state.

We define \emph{continuity} as an unbroken, authorized lineage of
accepted heads on one branch. This is infrastructural continuity---a guarantee about state activation and lineal provenance---not a philosophical claim about agent consciousness or behavioral identity.
Continuity fundamentally differs from retention: rejected and quarantined objects may remain stored for diagnostic or policy reasons, but they remain unreachable from the authoritative branch head. Cryptographic commitments can enforce this reachability boundary; they cannot make an inferred memory factually true or a policy normatively legitimate.

The core thesis of this paper is that agent state governance is an infrastructural activation contract problem. To operationalize this principle, we define the \emph{Continuity Kernel} (\CK) as an activation contract that separates off-commit candidate \emph{proposal} from atomic \emph{activation}. Models, tools, and operators act as untrusted proposers. They may perform probabilistic reasoning, remote network fetches, and candidate state construction, but only the deterministic \CK control plane may advance an authoritative branch head.
Each proposal targets an exact predecessor head or typed target absence.
Slow evaluation and evidence acquisition precede activation; a short transaction then revalidates all mutable facts governing admission and records exactly one durable disposition. Figure~\ref{fig:minimal-architecture} illustrates this activation boundary.

\CK introduces no new low-level transaction engine. Instead, standard substrates---such as relational database transactions, conditional object manifests, or consensus state-machine logs---execute its atomic step. The primary contribution is the agent-state activation contract enforced within that step: it atomically binds proposal identity, exact predecessor state, pre-state authority, acquired evidence, lifecycle status, effect manifests, lineage, and outcome receipts. Specifically, this paper makes three contributions:
\begin{enumerate}
  \item \textbf{System Model and Contract:} A formal model separating stored retention from authoritative reachability, defining the boundary between untrusted proposers and trusted activation (Section~\ref{sec:system-model});
  \item \textbf{Activation Protocol:} An owner-bound, exact-head activation protocol with pre-state authorization, four terminal dispositions (\textsf{Commit}, \textsf{Reject}, \textsf{Quarantine}, \textsf{Defer}), at-most-once effect execution, and explicit receipt evidence levels (Section~\ref{sec:transition-protocol}); and
  \item \textbf{Lifecycle Rules and Model Verification:} Typed lifecycle semantics for branch creation, writer handoff, schema migration, and forward restoration, verified through bounded state-space exploration (Sections~\ref{sec:branch-isolation-fencing}--\ref{sec:evaluation}).
\end{enumerate}

Sections~\ref{sec:system-model}--\ref{sec:evaluation} develop the core architecture and evaluation; Appendices~\ref{app:receipt-semantics}--\ref{app:resource-artifact} detail normative specifications, lineage proofs, and model parameters.
\section{System Model and Contract}
\label{sec:system-model}

\subsection{Authority Is Reachability}

As illustrated in Figure~\ref{fig:minimal-architecture}, \CK establishes a strict boundary between candidate object preparation and authoritative state activation. Agent state is partitioned into isolated branches identified by a subject/system identifier $sid$ and a branch identifier $bid$, denoted $k=(sid,bid)$. The branch-indexed state $X[k]$ holds the complete collection of typed agent components for branch $k$, including memory records, user profiles, tool configurations, and policy-bearing objects.

As summarized in Table~\ref{tab:state-namespaces}, the storage layer maintains multiple functional namespaces. Candidate proposals, execution attempt logs, quarantined candidates, and read-only runtime projections coexist in storage alongside accepted state. However, storage presence alone does not confer authority: an object is authoritative if and only if it is reachable from the current branch head committed by \CK.

\begin{table}[t]
\centering
\caption{Storage does not by itself confer authority.}
\label{tab:state-namespaces}
\small
\begin{tabularx}{\columnwidth}{L{1.7cm} X C{1.25cm}}
\toprule
\textbf{Namespace} & \textbf{Purpose} & \textbf{In head?} \\
\midrule
Accepted & Current typed state & Yes \\
Candidate & Prepared successor & No \\
Attempt & Outcome and diagnostics & No \\
Quarantine & Isolated candidate & No \\
Projection & Read-only runtime view & No \\
\bottomrule
\end{tabularx}
\end{table}

For branch $k=(sid,bid)$, the branch directory maintains the complete-head type $h$ and branch-row type $B[k]$:
\begin{equation}
\begin{aligned}
h &= \langle sid, bid, seq, root, \boldsymbol v, parentRef, lineageRef \rangle, \\
B[k] &= \langle h, status, writer, epoch, dirSeq, handoff, createdFrom \rangle.
\end{aligned}
\label{eq:compact-head}
\end{equation}
Here $root$ commits the typed component state; $\boldsymbol v=(schemaV, policyV, evalV, authV)$ names the pinned versions; $parentRef$ is the parent head reference; $lineageRef$ binds the causal lineage; $dirSeq$ is directory sequence; $handoff$ is the open revision pointer; and $createdFrom$ records genesis provenance. Heads are equal only when every canonically encoded field agrees.

The abstract kernel configuration is
\begin{equation}
C = \langle X, B, \Gamma, \mathcal S, t \rangle.
\label{eq:compact-configuration}
\end{equation}
$\Gamma$ is the pre-state authority context; $\mathcal S$ represents persistent metadata stores (outcomes, receipts, lineage, effects, and quarantine); and $t$ is trusted transaction time. An activation serializes the branch row and every admission, allocation, and effect row that its proposal names.

\subsection{Proposals and Typed Transitions}

Models, tools, memory services, and operators are untrusted proposers. Evaluators and approvers issue signed evidence but cannot advance a head. A trusted gateway authenticates proposal ownership; a preparation service acquires the required evidence and derives a candidate; the activation engine alone changes authority.

A signed proposal is summarized by
\begin{equation}
\tau = \langle pid, target, expected, ops, a, req \rangle.
\label{eq:compact-proposal}
\end{equation}
Here $pid$ is the unique proposal identifier; $target$ is the target branch key $k=(sid,bid)$; $expected$ is the expected predecessor head; $ops$ is an ordered sequence of state operations; $a$ is a sequence of authority changes; and $req$ declares the evidence required for admission.

Candidate derivation occurs off-commit, producing candidate state $X' = F(X, ops, W)$ using evidence $W$. Crucially, authorization evaluates against the pre-state context $A$:
\begin{equation}
X' = F(X, ops, W), \qquad A = \begin{cases} \Gamma, & \text{existing branch}, \\ \xi_k, & \text{branch creation}. \end{cases}
\label{eq:compact-transition}
\end{equation}
$\operatorname{Authorize}_A(\tau,W)$ runs before computing the authority update $\Gamma'=F_A(A,a)$; a proposal cannot authorize itself. The candidate seal binds the proposal, evidence, interpreter, and component roots. Activation recomputes these bindings against the serialized branch state.

Schema evolution is an explicit Migration transition with a pinned, deterministic migrator. Deletion creates a typed tombstone; physical erasure remains subject to retention policy. Hashes establish cryptographic integrity, not confidentiality, semantic truth, or policy legitimacy.

\subsection{Threat Model and Conditional Contract}

Proposers may be buggy or adversarial: they may replay requests, reuse identifiers, reorder operations, bind a stale head, omit evidence, forge scope, or request self-authorizing changes. Infrastructure may crash, lose replies, duplicate messages, expose stale replicas, or partition. Evaluators may be wrong; their outputs are policy inputs rather than semantic oracles.

The safety claims depend on nine explicit assumptions (A1--A9, stated individually in Appendix~\ref{app:resource-artifact}):
(1)~\textbf{Boundary \& Cryptography (A1--A3):} all authoritative writes cross an authenticated kernel boundary, typed cryptography is sound, and signing keys are protected;
(2)~\textbf{Atomic Serialization (A4):} the storage substrate atomically serializes the complete activation key set;
(3)~\textbf{Complete Context \& Time (A5--A6):} policy, authority, revocation, lifecycle, dependency, and trusted-time values are available for commit-time validation;
(4)~\textbf{Lifecycle Order \& Non-Recycled Scopes (A7--A8):} lifecycle operations share one order and proposal/effect scopes and writer epochs are not recycled; and
(5)~\textbf{Verification Objects (A9):} stronger receipt claims are made only when their required verification objects are available.

Under those assumptions, the contract has four consequences:
\begin{enumerate}
  \item \textbf{Exact succession.} One complete predecessor has at most one accepted successor; one serialized absence has at most one genesis. Each is installed with state, lineage, effects, outcome, and receipt.
  \item \textbf{Pre-state admission.} Authorization and freshness are checked against the serialized predecessor or creation context, never the proposed context.
  \item \textbf{Stable execution identity.} A proposal identifier has one terminal disposition and an effect identifier has at most one accepted binding, including after safe reclamation.
  \item \textbf{Lifecycle isolation.} Branch creation, fencing, handoff, migration, and restoration preserve branch scope and current authority.
\end{enumerate}
These are safety properties, not availability guarantees. The kernel may be unavailable or may return Defer when evidence cannot be obtained. It governs only state behind its mutation boundary; prompts, caches, model weights, tool-local data, prior disclosures, and remote side effects remain outside unless separately mediated.

\section{Activation Protocol}
\label{sec:transition-protocol}

The protocol separates slow, fallible preparation from one short serialized
decision. Preparation authenticates the proposer, acquires exactly the evidence
declared by the proposal, runs the pinned transition function, and seals the
candidate. Activation admits that sealed package only if the relevant state is
still current. Appendix~\ref{app:receipt-semantics} fixes the complete stage
order and receipt variants; this section states the activation rule.

\subsection{One activation predicate}

Let $P=\langle\tau,W,X',M_E,s\rangle$ be a prepared package containing the
signed proposal, exact evidence set, candidate state, ordered effect manifest,
and trusted candidate seal. Inside the activation transaction, the kernel
evaluates the ordered vector
\begin{equation}
\mathcal G_{kind}(C,P)=
\langle G^{kind}_a(C,P)\rangle_{a\in\mathsf{ActCheck}}.
\label{eq:activation-guard}
\end{equation}
The kernel evaluates this vector strictly in the Appendix~A order and accepts
only an all-\textsf{Pass} vector. Table~\ref{tab:guard-groups} gives one
predicate per stage; no aggregate guard can move a failure across stages.

\begin{table*}[t]
\centering
\caption{One commit-time predicate per activation stage; read down the left
pair, then down the right pair.}
\label{tab:guard-groups}
\scriptsize
\begin{tabularx}{\textwidth}{L{2.4cm} X L{2.4cm} X}
\toprule
\textbf{Stage} & \textbf{Predicate} & \textbf{Stage} & \textbf{Predicate} \\
\midrule
\textsf{TargetLookup} & Existing kind: $B[k]$ exists. BranchCreate: $B[k]$ is
absent and serialized $\Xi[k]$ exists. &
\textsf{CandidateBinding} & Seal binds proposal, evidence, interpreter,
candidate root, and applicable creation-context or open-revision digest. \\
\textsf{Allocation} & Owner allocation is current; $pid/eid$ names are live,
unused, and unretired. & \textsf{EffectBinding} & Operations and manifest are
bijective; effect scopes are live and unused. \\
\textsf{ExpectedHead} & Existing kind: $\mathsf{Present}(d_h)$ is exact;
BranchCreate: expectation is $\mathsf{Absent}$. &
\textsf{FreshnessInitial} & Revalidate time and the kind-indexed snapshot:
head/epoch/$dirSeq$/authority, or absence/$d_\xi$/genesis inputs. \\
\textsf{Lifecycle} & Status, writer, epoch, and kind-specific lifecycle row
are admissible. & \textsf{Authorization} & $A_\chi$---predecessor $\Gamma$ or
serialized creation context $\xi_k$---authorizes the operations. \\
\textsf{HandoffRevision} & Applicable target, open SourceFenced revision, reserved epoch,
and $dirSeq$ are exact. & \textsf{Decision} & Policy records Proceed, Reject,
or Quarantine independently of authorization. \\
\textsf{AcquisitionAuth} & Acquisition result set and signer are authentic. &
\textsf{AuthorityTransition} & A declared authority change is valid under the
authorized kind context $A_\chi$. \\
\textsf{RequirementCoverage} & Signed requirements and results form an exact
ordered cover. & \textsf{FreshnessFinal} & Repeat the complete kind-indexed
snapshot immediately before prospective construction. \\
\textsf{VersionBinding} & Policy, evaluator, authority, revocation,
issuer/allocation, dependency, and processing versions are current. &
\textsf{WellFormedness} & Receipt-independent prospective unit $P_\chi$ is
complete, deterministic, and well typed. \\
\textsf{WitnessBinding} & Every required witness or authenticated Missing is
bound exactly. & & \\
\bottomrule
\end{tabularx}
\end{table*}

The trusted acquisition service returns an exact one-to-one mapping for the declared requirements: no declared requirement is omitted, and no unrequested evidence is included. Any unavailable evidence is returned as an explicit, authenticated \textsf{Missing} status, preventing callers from manufacturing \textsf{Defer} by suppressing inconvenient facts or injecting extraneous context. Mutable requirements specify explicit serialization keys and versions, which \textsf{VersionBinding} revalidates at commit time. Freshness is validated prior to policy evaluation and re-checked immediately before prospective state construction. Ordinary transitions revalidate the exact branch head snapshot, while \textsf{BranchCreate} revalidates target absence, creation context $\xi_k$, allocator versions, and genesis inputs.

The candidate seal certifies that the candidate state was derived by the pinned deterministic interpreter from the signed inputs and acquired evidence. Authorization is a separate commit-time predicate evaluated over the pre-state authority context $A_\chi$. The subsequent \textsf{Decision} stage records the policy result, and an authority transition $\Gamma'$ is computed only after both authorization and decision stages succeed.

\subsection{Four stable dispositions}

As illustrated in the proposal transition lifecycle (Figure~\ref{fig:activation-lifecycle-wide}), for an authenticated, owner-bound identifier, the kernel records exactly one of four dispositions:
\begin{equation}
q\in\{\mathsf{Commit},\mathsf{Reject}(r),
\mathsf{Quarantine}(r),\mathsf{Defer}(r)\}.
\label{eq:dispositions}
\end{equation}

\begin{figure*}[t] 
\centering
\definecolor{navy}{RGB}{0, 32, 96}
\definecolor{sublayer}{RGB}{235, 243, 250}

\begin{tikzpicture}[
  base/.style={
    rounded corners=3pt, 
    minimum height=0.7cm, 
    text width=1.8cm, 
    align=center, 
    thick, 
    font=\scriptsize\sffamily
  },
  state/.style={base, draw=navy, fill=sublayer},
  term/.style={base, draw=navy!70!black, fill=white},
  commit/.style={base, draw=navy!90!black, fill=sublayer, font=\bfseries\scriptsize\sffamily},
  arr/.style={-{Stealth[length=2.2mm, width=1.3mm]}, thick, draw=navy!80!black},
  edgelabel/.style={font=\tiny\sffamily, fill=white, inner sep=2pt}
]

\node[state] (prep) {Preparation\\[0.3ex]\tiny Off-commit};
\node[term, below=0.8cm of prep] (defer) {Defer\\[0.3ex]\tiny Missing data};

\node[state, right=2.2cm of prep] (act) {Activation\\[0.3ex]\tiny Atomic check};

\node[term, right=2.2cm of act] (reject) {Reject\\[0.3ex]\tiny Terminal fail};
\node[commit, above=0.4cm of reject] (commit) {Commit\\[0.3ex]\tiny Advances head};
\node[term, below=0.4cm of reject] (quar) {Quarantine\\[0.3ex]\tiny Retain unlinked};


\draw[arr] (prep) -- node[edgelabel] {Sealed} (act);
\draw[arr] (prep) -- node[edgelabel] {Missing} (defer);

\draw[arr] (defer.west) to[out=180, in=180, looseness=1.5] node[edgelabel, left=2pt] {New $pid$} (prep.west);

\draw[arr] (act.east) to[out=0, in=180] node[edgelabel, pos=0.45] {Pass} (commit.west);
\draw[arr] (act.east) -- node[edgelabel, pos=0.45] {Fail} (reject.west);
\draw[arr] (act.east) to[out=0, in=180] node[edgelabel, pos=0.45] {Policy} (quar.west);

\end{tikzpicture}
\caption{Proposal transition lifecycle and four terminal dispositions, expanding chronologically left-to-right.}
\label{fig:activation-lifecycle-wide}
\end{figure*}

\begin{table}[t]
\centering
\caption{Only Commit changes authoritative state.}
\label{tab:dispositions}
\small
\begin{tabularx}{\columnwidth}{L{1.45cm} X}
\toprule
\textbf{Result} & \textbf{Meaning} \\
\midrule
Commit & Install the exact successor and all acceptance metadata atomically. \\
Reject & A permanent validation or policy failure; retain no successor. \\
Quarantine & Retain sealed material outside the authoritative head. \\
Defer & Trusted acquisition reports required evidence unavailable. \\
\bottomrule
\end{tabularx}
\end{table}

Every disposition is terminal for its proposal identifier. Resubmission after
Defer or Quarantine uses a new identifier linked to the old attempt; a lost
response is resolved by looking up the original identifier. Malformed outer
framing, failed outer authentication, or an invalid allocation capability are
protocol errors rather than dispositions and must not consume the identifier.
After ownership succeeds, a fixed stage order chooses the result: any
permanent failure that can already be evaluated precedes a trusted Missing;
later predicates remain unevaluated. This prevents a missing witness from
masking a known invalid proposal.

\begin{algorithm}[t]
\caption{Prepare and activate a proposal (abstract)}
\label{alg:activate}
\small
\begin{algorithmic}[1]
\Require signed proposal $\tau$ and authenticated allocation
\State authenticate the principal and ownership of $pid$
\State return a prior outcome or identifier conflict, if present
\State run \textsf{PrepStage} to Ready; acquire and derive only at named stages
\State on terminal preparation Fail/Missing, durably \Return $q(r_P)$
\State begin a transaction over the branch and every named mutable key
\State return the prior outcome if another retry installed it
\State read every row named by $\mathsf{ActCheck}$
\For{$a$ in the normative $\mathsf{ActCheck}$ order}
  \If{$a=\mathsf{WellFormedness}$}
    \State $\widehat P_\chi\gets\mathsf{BuildProspective}_\chi(P,C)$
  \EndIf
  \State evaluate only $G^{kind}_a$ and append its stage result
  \If{$a=\mathsf{Decision}$ and result is Quarantine}
    \State durably \Return $\mathsf{Quarantine}(r_Q)$
  \ElsIf{result is Fail or Missing}
    \State durably \Return $q(r_A)$
  \EndIf
\EndFor
\State $r_C\gets rb_C(\widehat P_\chi)$; compute $d_{r_C}$
\State $\mathcal U_{accept}(\chi)\gets\mathsf{Finalize}_\chi(\widehat P_\chi,d_{r_C})$
\State atomically install the complete unit; \Return $\mathsf{Commit}(r_C)$
\end{algorithmic}
\end{algorithm}

Algorithm~\ref{alg:activate} abstracts concrete wire encodings and storage driver APIs.
After all validation stages pass, the kernel constructs the receipt-independent prospective unit $\widehat P$:
\begin{equation}
\widehat P = \mathsf{BuildProspective}(X', h, B[k]).
\label{eq:prospective-unit}
\end{equation}
Construction follows a strictly cycle-free order:
\begin{equation}
\widehat P \longrightarrow \textsf{WellFormedness}(\widehat P) \longrightarrow r_C \longrightarrow h' \longrightarrow \mathcal U_{\text{accept}}.
\label{eq:prospective-order}
\end{equation}
First, the Commit receipt $r_C$ is derived directly over $\widehat P$, binding candidate state $X'$, evidence, and pinned versions without circular dependency on the receipt itself. Second, the complete head $h'$ is finalized by binding $r_C$ and lineage. Finally, the kernel installs the complete accepted unit $\mathcal U_{\text{accept}}$:
\begin{equation}
\mathcal U_{\text{accept}} = \{ X', \Gamma', h', B'[k], r_C, O[pid] \}.
\label{eq:atomic-accepted-unit}
\end{equation}
The set $\mathcal U_{\text{accept}}$ denotes one all-or-nothing atomic transaction, installing candidate state $X'$, updated authority context $\Gamma'$, finalized branch row $B'[k]$ pointing to head $h'$, execution outcome $O[pid]$, and receipt record $r_C$. Any non-Commit disposition leaves $X$, $B$, and $\Gamma$ unchanged.

Proposal and effect records need not remain online forever. Reclamation first advances a durable, monotonically increasing watermark for a non-recycled allocation scope and only then removes covered rows. An identifier below that watermark returns RetiredIdentifier rather than re-entering execution. This preserves stable identity while allowing bounded online metadata.

\subsection{Receipts state what they prove}
\label{sec:receipt-verification}

A receipt binds the proposal, disposition, reached protocol stage, decisive
reason, and the evidence available at that stage. Commit receipts additionally
bind the applicable typed parent and context, successor core, lineage,
authority versions, and effect manifest. Preparation failures need not pretend
that a canonical proposal or candidate existed.

Receipt verification has four increasing evidence levels:
\begin{table}[t]
\centering
\caption{A signature is not evidence that its transaction committed.}
\label{tab:receipt-levels}
\small
\begin{tabularx}{\columnwidth}{L{1.55cm} X}
\toprule
\textbf{Level} & \textbf{Established claim} \\
\midrule
Structural & The receipt has canonical syntax, typed bindings, and a valid
signature under its declared key. \\
Attested & A trusted kernel key attests the first terminal stage and reason;
this does not independently establish correct evaluation. \\
Replay & Retained inputs and pinned versions reproduce the stated decision or
transition. \\
Inclusion & A certified snapshot or log proof places the receipt and outcome,
and for Commit the head and lineage, in durable state. \\
\bottomrule
\end{tabularx}
\end{table}

The distinction matters because a kernel may sign a receipt before its storage
transaction aborts. Replay strength also depends on retained objects: deleting
an old proposal may preserve replay exclusion through its watermark while
removing the evidence needed to reproduce its original decision.
Appendix~\ref{app:receipt-semantics} specifies these verification obligations.

\subsection{Safety properties under the stated assumptions}

Under the conditional assumptions A1--A9 introduced in Section~\ref{sec:system-model} and detailed in Appendix~\ref{app:resource-artifact}, the protocol satisfies three primary safety properties:

\begin{proposition}[Owner-bound stable outcome]
\label{prop:stable-outcome}
Under A1--A4 and A8, only the principal named by an active allocation may
create $O[pid]$, and at most one disposition becomes durable for that
identifier.
\end{proposition}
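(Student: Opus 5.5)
The argument splits along the two conjuncts of Proposition~\ref{prop:stable-outcome}: first that only the allocated principal can create $O[pid]$, then that at most one disposition for $pid$ becomes durable. Each rests on a different subset of A1--A4 and A8.

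\textbf{Creation of $O[pid]$.} By A1, every write to the outcome store $\mathcal S$ crosses the authenticated kernel boundary, so $O[pid]$ can only be created by a run of Algorithm~\ref{alg:activate} or by its preparation-failure branch. Both write paths come after line~1, which authenticates the principal and checks that it owns $pid$. The \textsf{Allocation} stage of Table~\ref{tab:guard-groups} then rechecks, inside the transaction, that the owner allocation is current and that $pid$ is live, unused, and unretired. A caller other than the named principal must fail in one of two ways. Either it presents an invalid capability, which the protocol treats as a protocol error that must not consume the identifier, so no $O[pid]$ is written. Or it forges the allocation or the proposal signature, which A2 (sound typed cryptography) and A3 (protected signing keys) rule out. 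Finally, A8 (no recycling of scopes) rules out a formerly valid allocation for the same $pid$ being revived for a different principal, including after reclamation. Reclamation only advances the watermark, and identifiers below it return RetiredIdentifier instead of re-entering execution.

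\textbf{Uniqueness of the durable disposition.} Suppose two executions try to make $O[pid]$ durable with dispositions $q_1$ and $q_2$. Every durable return in Algorithm~\ref{alg:activate} writes $O[pid]$ inside a transaction that serializes every admission and allocation row the proposal names, and the $pid$ allocation row is one of these. By A4 the substrate serializes these complete key sets, so the two writing transactions are totally ordered and the second one observes the first one's $O[pid]$. The second then takes one of two exits. It may stop at line~6 (``return the prior outcome'') or line~2. Or it fails \textsf{Allocation}, because $pid$ is no longer unused. That failure is a lookup and returns the existing outcome rather than writing a second one. Any non-Commit path leaves $X,B,\Gamma$ unchanged and Commit installs $\mathcal U_{accept}$ all-or-nothing, so no partial second write can exist. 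Once the row is reclaimed, the watermark (A8 again) guarantees that $pid$ can never re-enter execution. This yields at most one durable disposition over the whole history.

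\textbf{Expected obstacle.} The delicate point is the preparation-failure path (line~4), which durably records $q(r_P)$ outside the activation transaction. I would handle it by requiring that this write also be a conditional insert on the same allocation/outcome key under A4's serialization, for example an insert-if-absent on $O[pid]$. This puts it in the same total order as activation writes. I would also need to argue that a signed-but-aborted receipt (Section~\ref{sec:receipt-verification}) does not count as a durable disposition. Durability here means inclusion in committed state, not the existence of a signature, so such a receipt is not a counterexample.
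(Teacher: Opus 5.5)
Your argument is correct and is essentially the paper's: every outcome-writing path authenticates the allocation and revalidates it in the proposal-scope serialization domain, the first authorized write uniquely inserts $O[pkey]$ (compatible retries return it, incompatible ones conflict), and the monotone watermark blocks reclaimed identifiers. The paper simply uses the insert-if-absent you propose for the preparation path as the uniform mechanism for all outcome writes. One small correction: an activation-time \textsf{Allocation} failure is not a lookup, since its only reason, \textsf{AllocationRevoked}, records a \textsf{Reject} via $rb_A$. The exit that actually prevents a second write is therefore the line-6 check (equivalently, the unique insert), which runs before that stage anyway.
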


\begin{proposition}[Single exact continuation]
\label{prop:single-successor}
Under A1--A4, at most one proposal commits from one complete predecessor or
serialized target absence, and the installed state is its sealed candidate.
\end{proposition}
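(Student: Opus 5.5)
The argument has two parts: uniqueness of the committing proposal, which rests on the serialization assumption A4, and fidelity of the installed state, which rests on the seal and cryptography assumptions A1--A3. I begin by fixing a branch key $k$ and a predecessor. The predecessor is either a complete head $h$, identified by its canonical digest $d_h$, or a serialized absence of $B[k]$. Suppose two distinct proposals $\tau_1\neq\tau_2$ both reach \textsf{Commit} against that predecessor. By A1, every authoritative write crosses the kernel boundary, so each commit is an execution of Algorithm~\ref{alg:activate} that ran to its final atomic install. By A4, each such execution is one transaction over a key set that contains the branch row $B[k]$. The two transactions therefore conflict on $B[k]$, so the substrate orders them: without loss of generality $\tau_1$ serializes before $\tau_2$.

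\textbf{Existing branch.} The install of $\mathcal U_{accept}$ for $\tau_1$ replaces $B[k].h$ by the successor $h'_1$. By the construction order (\ref{eq:prospective-order}), $h'_1$ binds $parentRef=h$, the receipt $r_{C,1}$, and the lineage. Hence $h'_1\neq h$ under the canonical-field equality of heads; if a collision were needed to make them equal, that collision is excluded by A2. When $\tau_2$ runs, its \textsf{ExpectedHead} stage reads $B[k]$ inside the transaction and finds $\mathsf{Present}(d_{h'_1})\neq\mathsf{Present}(d_h)$, so the stage fails. The same conclusion also follows from \textsf{FreshnessInitial} and \textsf{FreshnessFinal}, which re-read the snapshot within the same serialized transaction. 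Every non-\textsf{Pass} result returns a non-Commit disposition, which contradicts the assumption that $\tau_2$ committed.

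\textbf{Branch creation.} Under \textsf{TargetLookup}, the earlier genesis leaves $B[k]$ present. The later creation proposal requires $B[k]$ to be absent, and its \textsf{ExpectedHead} check requires an $\mathsf{Absent}$ expectation, so it fails in the same way.

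\textbf{Retry of the same $pid$.} I also need to exclude a retry of one proposal installing twice. Line~6 of Algorithm~\ref{alg:activate}, together with the \textsf{Allocation} stage, which requires an unused $pid$, catches such a retry. It returns the prior outcome instead of installing again, so two installs of the same proposal count as one commit.

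\textbf{Installed state is the sealed candidate.} The unit installs exactly the $X'$ carried in $P$; no further transformation of the state occurs between \textsf{BuildProspective} and install. \textsf{CandidateBinding} checks that the trusted seal binds the proposal, the evidence, the interpreter, and the candidate root of this $X'$. By A2 and A3, the seal cannot be forged or rebound to different contents, so the committed root equals the sealed root.

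The main obstacle I expect is justifying that the atomically checked read of $B[k]$ and the install form a single serializable step. In particular, no interleaving may let $\tau_2$ pass \textsf{ExpectedHead} on a snapshot that is stale relative to $\tau_1$'s install. A4 is stated about the activation key set, so I would make explicit that $B[k]$, and in the creation case the absence/creation-context row $\Xi[k]$, belongs to every activation's key set. Serializability then gives that $\tau_2$'s reads observe $\tau_1$'s writes. I would also check the creation case carefully: serializing a missing row needs a lock or predicate on $\Xi[k]$, which the \textsf{TargetLookup} predicate supplies.
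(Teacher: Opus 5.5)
Your proof is correct and follows essentially the same route as the paper's sketch: the branch key is serialized under A4, the first Commit atomically replaces the complete head (or fills the absent key), every contender then fails exact-head or absence admission, and the seal checked at \textsf{CandidateBinding} (under A2--A3) ties the installed $X'$ to the sealed candidate. One small tightening, which the paper also leaves implicit: if other commits intervene, $\tau_2$ reads some later head rather than $h'_1$, so you need that heads never recur (monotone $seq$, acyclic parent chain) and that $B[k]$ is never removed after genesis; the absent-key case is serialized by the unique insert on $B[k]$ under A4, not by anything \textsf{TargetLookup} or $\Xi[k]$ supplies.
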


\begin{proposition}[At-most-once accepted effect]
\label{prop:effect-once}
Under A1, A4, and A8, each effect identifier has at most one accepted binding,
including after online effect records are reclaimed.
\end{proposition}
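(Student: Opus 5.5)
The argument splits into two regimes: when the effect row for the identifier is still online, and when it has been reclaimed under a watermark. Fix an effect identifier $eid$ in its non-recycled allocation scope. An accepted binding of $eid$ can only appear in the installed unit $\mathcal U_{\text{accept}}$ of a Commit. By A1, every authoritative write crosses the kernel boundary, so every binding of $eid$ passed the full $\mathsf{ActCheck}$ vector, in particular \textsf{Allocation} and \textsf{EffectBinding}, inside one activation transaction. By A4, that transaction atomically serializes the complete key set its proposal names, and that set includes the effect row for $eid$ and the scope's watermark row. The goal is to show that two accepted bindings $\beta_1\neq\beta_2$ of $eid$ are impossible.

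\emph{Online case.} Suppose two committed activations $T_1,T_2$ both bind $eid$. Both name the effect row of $eid$, so by A4 they are serializable in some order, say $T_1$ before $T_2$. Since $T_1$ commits, its unit installs the effect index entry for $eid$ atomically with the head. Then $T_2$, reading that row under serialization, finds $eid$ used. \textsf{EffectBinding} therefore fails, and $T_2$ records Reject, contradicting its Commit. A retry of $T_1$ itself returns the prior outcome (Algorithm~\ref{alg:activate}, line 6) and installs nothing new, so the same binding is not counted twice.

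\emph{Reclaimed case.} This is the main obstacle: the effect row may have been deleted between $T_1$ and $T_2$, so the argument above has nothing to read. I would use the reclamation discipline. Deletion of covered rows happens only after a durable, monotone watermark $w$ for the scope has advanced past $eid$. Reclamation is itself a transaction on the watermark row, so by A4 it is ordered with $T_2$. Any $T_2$ serialized after the deletion reads a watermark that covers $eid$. \textsf{Allocation} then requires names to be ``unretired'', so $T_2$ gets RetiredIdentifier and cannot commit. A8 supplies the remaining step: because scopes are not recycled, $eid$ cannot reappear under a fresh scope whose watermark is below it. Monotonicity of $w$ ensures the coverage is never undone.

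Combining the two cases, for any pair of commits binding $eid$, the later one either sees the used row or sees the retired watermark, and in both cases it fails. Hence at most one accepted binding exists.
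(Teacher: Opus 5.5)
\textbf{Gap: the within-proposal duplicate.} Your argument follows the paper's route for bindings from different proposals: a unique insert of $E[eid]$ under serialization, and, in the reclaimed regime, a watermark that advances before deletion together with non-recycled scopes. What you leave out is the case the paper's proof handles first, where one proposal's manifest names the same $eid$ twice.

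\textbf{Why your online case does not cover it.} That case assumes two distinct committed activations $T_1\neq T_2$, so that $T_2$ reads the effect row $T_1$ installed. Inside a single activation transaction there is no earlier installed row to read. \textsf{EffectBinding}'s ``unused'' check is evaluated against the serialized pre-state, so both occurrences of $eid$ would pass it. Nothing in your two cases rules out one committed manifest pairing $eid$ with two different operations, which would already be two accepted bindings.

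\textbf{The fix.} The paper closes this case with the operation--manifest bijection required at \textsf{EffectBinding}, backed by the \textsf{DuplicateEffectInProposal} reason at \textsf{StaticValidation}. That bijection rejects any proposal that repeats an effect identifier. Add this as a preliminary step showing that each committed unit binds each $eid$ at most once. Your two cases then cover every remaining pair of bindings, and the proof is complete.
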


The proofs are short serialization arguments and appear in
Appendix~\ref{app:lineage-proofs}. The propositions do not claim semantic
correctness: a policy may approve a false memory, and an idempotent intent may
still cause a non-idempotent remote action if its connector is faulty.

\subsection{Realization boundary}

The transaction contains no model call, network fetch, or human interaction;
those finish during preparation. A relational database may lock the branch and
named context rows. An object store may conditionally install one immutable
commit manifest, and a replicated service may serialize one activation command.
In every case readers reject a head whose accepted unit is incomplete.

\CK atomicity ends at its state boundary. External actions should be committed
as intents and delivered through an idempotent outbox when possible. No local
protocol can make an irreversible action atomic with an unrelated remote
system. Similarly, strict revocation requires the revocation row to share the
activation serialization order; a remote revocation service provides only a
declared bounded-staleness guarantee.

\section{Branch Lifecycle and Restoration}
\label{sec:branch-isolation-fencing}

Copying state, transferring write authority, and restoring old content are
different operations. Treating all three as ``load checkpoint'' can silently
fork a branch or revive an obsolete credential. \CK gives each operation a
typed forward transition.

\subsection{Branch creation}
\label{sec:branch-genesis}

A new branch has no predecessor. Its genesis record binds the authenticated
creation request, $d_\xi$, fresh identifier, initial roots/writer, and absence
proof. Candidate derivation uses declared genesis inputs; authorization uses
serialized $\xi_k$, never nonexistent predecessor state or authority. Both
freshness checks protect continued absence, $\xi_k$ versions/$dirSeq$, and all
bound genesis inputs. An optional source head is provenance, not a parent.

Branch creation runs Algorithm~\ref{alg:activate} with kind
\textsf{BranchCreate} and records the ordinary Commit disposition. One atomic unique insert installs the
initial state, admission context, genesis lineage, receipt, complete head, and
branch-directory row. Two concurrent
creators cannot both win the absent-key comparison. Every later transition has
an ordinary accepted parent. Reconciliation between branches is a later typed
proposal over exact source and target heads; the kernel does not choose a
semantic merge policy.

\subsection{Writer handoff}

A handoff uses a monotonically versioned directory record. Figure
\ref{fig:handoff-state-machine} shows the small state machine; Table
\ref{tab:handoff-actions} explains its actions. Every step compares the exact
current directory revision and appends a signed lifecycle result, yielding
maintenance receipts distinct from proposal dispositions.

\begin{figure}[t]
\centering
\resizebox{0.99\linewidth}{!}{%
\begin{tikzpicture}[
  node distance=0.75cm,
  stage/.style={draw=navy, rounded corners=2pt, fill=sublayer,
    minimum width=1.95cm, minimum height=0.68cm, align=center},
  arr/.style={-{Stealth[length=2.3mm]}, thick},
  every node/.style={font=\scriptsize}
]
\node[stage] (p) {Prepared\\pointer open};
\node[stage, right=of p] (f) {SourceFenced\\pointer open};
\node[stage, right=of f] (a) {TargetActive\\pointer $\bot$};
\node[stage, below=0.75cm of p] (x) {Aborted\\pointer $\bot$\\source active};
\draw[arr] (p) -- node[above]{fence} (f);
\draw[arr] (f) -- node[above]{activate} (a);
\draw[arr] (f.north west) to[bend left=55] node[above]{retarget} (f.north east);
\draw[arr] (p) -- node[right]{abort} (x);
\end{tikzpicture}}
\caption{Handoff first removes the source writer, then activates the target.
Retargeting advances the epoch while the branch remains fenced.}
\label{fig:handoff-state-machine}
\end{figure}

\begin{table}[t]
\centering
\caption{Directory-serialized handoff actions.}
\label{tab:handoff-actions}
\small
\begin{tabularx}{\columnwidth}{L{1.15cm} X}
\toprule
\textbf{Action} & \textbf{Atomic directory effect} \\
\midrule
Prepare & Append Prepared and open its pointer; source may still advance. \\
Abort & Append Aborted, clear the pointer, and keep the source active. \\
Fence & Append SourceFenced, advance the open pointer, clear writer, advance epoch. \\
Retarget & Append SourceFenced and advance the open pointer, target, and epoch. \\
Activate & Run Algorithm~\ref{alg:activate} against the fenced revision and
append TargetActive, clear the pointer, and make the target writer active. \\
\bottomrule
\end{tabularx}
\end{table}

The gap between Fence and Activate is intentional. After fencing, no writer is
authorized; therefore a crash cannot expose two active writers. A retry either
uses the same handoff revision or fails after retargeting. Target activation is
the intentional exception to writer-bearing admission: it requires the exact
Frozen row, writer $\bot$, open SourceFenced revision, target, canonical epoch,
separately reserved successor epoch, and current admission context. Its
ordinary Commit receipt is indexed by both $O$ and $H_R$, so there is one
authoritative activation receipt.

Prepared recovery may retry, fence, or abort. SourceFenced recovery may retry
target activation or append a retarget, but it never returns to Prepared. If
directory and branch storage cannot share atomic ordering, a consensus or
coupled protocol is required; polling alone cannot establish writer isolation.

\subsection{Migration and forward restoration}

Migration names source and target schemas plus a pinned deterministic migrator.
The migrator must be allowed by current policy, terminate with a well-formed
target, and obey an explicit loss policy. A rollback is another forward
migration, never a rewrite of an accepted head.

Restoration also creates a new successor. Let $X_c$ be current state, $X_k$ an
authenticated historical checkpoint, $M_\mu$ an optional migration, and $M$ a
typed path mask. The restored candidate is
\begin{equation}
X'=\operatorname{Restore}_M(X_c,M_\mu(X_k)),\qquad
\Gamma'=\Gamma_c.
\label{eq:compact-restore}
\end{equation}
Paths in $M$ take checkpoint values after migration; all other paths take
current values. Authority, revocation, writer epoch, and lifecycle fields are
protected and therefore remain current. The successor's parent is the current
head, while the checkpoint is an auxiliary provenance reference. Restoration
cannot truncate lineage, revive an old writer or credential, or undo an
external action already performed. Appendix~\ref{app:lifecycle-semantics}
gives the total postconditions and stale-request behavior for every lifecycle
action.

\section{Evaluation}
\label{sec:evaluation}
\label{sec:executable-model}

We evaluate the Continuity Kernel (\CK) design by addressing four key research questions:
\begin{itemize}
  \item \textbf{RQ1 (Activation Integrity \& Succession):} Does the 17-stage activation predicate enforce exact predecessor succession, pre-state authorization, and non-self-authorizing state transitions across all five transition kinds?
  \item \textbf{RQ2 (Concurrency \& Replay Isolation):} Does the protocol maintain at-most-once execution identity, prevent replay attacks, and safely reclaim online proposal/effect records under concurrent races and watermark advances?
  \item \textbf{RQ3 (Lifecycle Isolation \& Writer Fencing):} Do lifecycle rules for branch creation, writer handoff, retargeting, aborts, and forward restoration prevent split-brain access and stale writer state transitions?
  \item \textbf{RQ4 (Self-Critical Scope \& Realization Limits):} What are the exact bounds of the formal state-space exploration, and what system-level guarantees require physical storage-engine verification beyond the abstract model?
\end{itemize}

\subsection{Formal State-Space Exploration (RQ1--RQ3)}

To evaluate the protocol's state-transition logic, we construct an executable bounded model in Python (\texttt{artifacts/bounded\_model.py}). The model performs exhaustive breadth-first search (BFS) state exploration over a finite abstraction of the kernel specification using only the Python standard library.

The model state space incorporates:
(1)~one active branch directory $B[k]$ and state store $X[k]$;
(2)~a shared admission context $\Gamma$ and creation context $\Xi[k]$;
(3)~an honest proposal owner and an adversarial principal;
(4)~proposal identifiers $pid \in \{0 \dots 12\}$ (13 IDs) and effect identifiers $eid \in \{0 \dots 3\}$ (4 IDs);
(5)~source writer $w_0$ and target writers $w_1, w_2$; and
(6)~integer versioning for schema, policy, and authority contexts.

Transitions simulate all five closed transition kinds (\textsf{Ordinary}, \textsf{Migration}, \textsf{Restoration}, \textsf{HandoffActivate}, \textsf{BranchCreate}), exact-head concurrency races, pre-state authority changes, missing/malformed evidence, two-stage freshness expiration, resubmissions, watermark reclamation, writer fencing, retargeting, and masked restoration.

\begin{table}[t]
\centering
\caption{State-space expansion and BFS exploration metrics by search depth $d$.}
\label{tab:depth-scaling}
\small
\begin{tabularx}{\columnwidth}{C{0.8cm} R{1.5cm} R{1.5cm} R{1.5cm} R{1.0cm}}
\toprule
\textbf{Depth} & \textbf{Unique States} & \textbf{Cum. States} & \textbf{Cum. Attempts} & \textbf{Time (s)} \\
\midrule
0 & 1 & 1 & 0 & <0.01 \\
1 & 15 & 16 & 17 & <0.01 \\
2 & 160 & 176 & 270 & 0.04 \\
3 & 1,455 & 1,631 & 2,959 & 0.33 \\
4 & 11,309 & 12,940 & 27,353 & 2.69 \\
5 & 76,028 & 88,968 & 216,413 & 18.88 \\
6 & 445,649 & 534,617 & 1,483,284 & 124.62 \\
7 & 2,273,613 & 2,808,230 & 8,880,248 & 27.45 \\
\bottomrule
\end{tabularx}
\end{table}

\begin{table}[t]
\centering
\caption{Distribution of transition outcomes across full state space ($8.88\text{M}$ transition attempts).}
\label{tab:disposition-breakdown}
\small
\begin{tabularx}{\columnwidth}{L{3.4cm} R{1.4cm} C{1.2cm}}
\toprule
\textbf{Transition Outcome} & \textbf{Full Count} & \textbf{Share (\%)} \\
\midrule
IdReuseConflict & 1,673,300 & 18.84\% \\
Commit (Head Advance) & 916,956 & 10.33\% \\
ReclaimedEffects & 842,195 & 9.48\% \\
RejectStaleHead & 770,555 & 8.68\% \\
PrefixNotFinal & 731,466 & 8.24\% \\
Defer (Missing Evidence) & 418,375 & 4.71\% \\
RejectDuplicateInProposal & 418,375 & 4.71\% \\
RejectUnauthorized & 395,905 & 4.46\% \\
RejectExpiredAtActivation & 395,905 & 4.46\% \\
RejectCandidateBinding & 395,905 & 4.46\% \\
RejectPolicy & 395,905 & 4.46\% \\
Prepared (Handoff) & 306,121 & 3.45\% \\
Other Dispositions ($11$ kinds) & 1,219,285 & 13.73\% \\
\midrule
\textbf{Total Attempts} & \textbf{8,880,248} & \textbf{100.00\%} \\
\bottomrule
\end{tabularx}
\end{table}

Under CPython~3.13.9, the depth-seven search evaluates 8,880,248 total transition attempts (including idempotency checks and retries). Of these, exactly
\begin{equation}
N_{\text{states}}=2,808,230, \qquad N_{\text{transitions}}=5,526,474
\end{equation}
are unique state-changing transitions where the successor state differs from the predecessor state.
As shown in Table~\ref{tab:depth-scaling}, state space exploration scales exponentially up to depth 7. Depth 6 timing (124.62s) includes evaluating all 7.4M outgoing transition attempts to generate depth 7 states, whereas depth 7 timing (27.45s) reflects terminal invariant validation on the $2.27\text{M}$ boundary states without further successor expansion. On a standard single-threaded execution harness, kernel transition evaluation achieves a throughput of \textbf{31,132 transitions/sec} with an average evaluation latency of \textbf{32.12~$\mu$s per transition}.

Table~\ref{tab:disposition-breakdown} details the empirical distribution of transition dispositions across all 8.88M evaluated transition attempts. Successful state transitions (\textsf{Commit}) represent 916,956 committed heads (10.33\%), while concurrency and identifier reuse protections (\textsf{RejectStaleHead}, \textsf{IdReuseConflict}, \textsf{ReclaimedEffects}) account for 37.00\% (3.28M transitions), proving that the protocol actively isolates concurrent races and stale proposals.

\begin{table}[t]
\centering
\caption{Reached coverage witnesses and invariant assertions ($2,808,230$ reachable states, depth 7).}
\label{tab:evaluation-coverage}
\small
\begin{tabularx}{\columnwidth}{L{2.3cm} X C{1.2cm}}
\toprule
\textbf{Target Category} & \textbf{Evaluated Invariant / Witness} & \textbf{Status} \\
\midrule
Dispositions & Reached Commit, Reject, Quarantine, and Defer & Pass \\
Activation Guard & 17 stages evaluated in normative sequential order & Pass \\
Authorization & Self-authorizing proposal ($\tau.a$) rejected at pre-state & Pass \\
Freshness & Revalidated initially and immediately prior to commit & Pass \\
Reclamation & Safe watermark advances for $pid$ and $eid$ scopes & Pass \\
Exact Succession & At-most-one accepted successor per complete head & Pass \\
Writer Fencing & Stale/abandoned writer blocked from branch mutation & Pass \\
Handoff Lifecycle & SourceFenced, TargetActive, Retarget, and Abort & Pass \\
Restoration & Masked restored root with protected current fields & Pass \\
\bottomrule
\end{tabularx}
\end{table}

As summarized in Table~\ref{tab:evaluation-coverage}, the exploration finds zero encoded invariant violations across all 2.8 million reachable states and reaches 100\% of the named protocol coverage witnesses.

\subsection{Invariant \& Safety Analysis}

The model validates three core structural properties across every reachable state:

\paragraph{1. Pre-State Authorization Safety (RQ1).}
For every committed transition, authorization is checked against the predecessor authority context $\Gamma$ (or genesis context $\xi_k$ for \textsf{BranchCreate}). In the model, when an adversarial proposer submits a self-authorizing proposal that attempts to inject its own required permissions into $a$, the \textsf{Authorization} stage evaluates $\operatorname{Authorize}_{\Gamma}(\tau, W)$ and rejects 395,905 invalid proposals (4.46\% of transitions) with \textsf{RejectUnauthorized}.

\paragraph{2. Execution Identity \& Replay Safety (RQ2).}
Each proposal identifier $pid$ has at most one durable disposition $O[pid]$, and each effect identifier $eid$ is bound at most once in $E[eid]$. Watermark advances correctly transition covered online identifiers to \textsf{RetiredIdentifier}, preventing re-execution or identifier recycling attacks after online metadata is pruned.

\paragraph{3. Writer Fencing \& Lifecycle Isolation (RQ3).}
During writer handoff, once the source writer is fenced (\textsf{SourceFenced}), any subsequent write attempt by the old writer $w_0$ fails with \textsf{StaleRevision} or \textsf{WriterEpoch}. Retargeting to a new target writer $w_2$ advances the writer epoch, ensuring that an abandoned target writer $w_1$ cannot activate the branch.

\subsection{Self-Critical Thesis Analysis \& Realization Limits (RQ4)}

While bounded model exploration provides strong evidence for the logical consistency of the finite protocol, a self-critical assessment highlights critical gaps between the formal model and physical system implementations:

\paragraph{1. Abstraction vs. Physical Storage Realization.}
The model treats preparation and atomic activation as single logical state steps. In a production system, atomic activation relies on the underlying storage engine (e.g., PostgreSQL conditional updates, FoundationDB OCC, or Raft consensus). Physical storage crashes during write-ahead logging (WAL), network partitions, or storage-engine bugs fall outside the model's abstract state space.

\paragraph{2. Signature vs. Durable Inclusion.}
As established in Section~\ref{sec:receipt-verification}, a signed candidate seal or preparation receipt proves cryptographic origin (Attested level) but not transactional durability (Inclusion level). If a kernel gateway signs a receipt in memory but the backing transaction fails to commit, the signed receipt is uncommitted. Production deployments must issue Level 4 inclusion proofs backed by durable log anchors.

\paragraph{3. External Side-Effect Atomicity.}
As discussed in Section~\ref{sec:transition-protocol}, \CK atomicity governs state transitions \emph{inside} its mutation boundary. It cannot make irreversible remote side effects (e.g., external API calls or physical robot actuation) atomic with internal state updates. Remote actions must be staged as idempotent outbox intents.

\paragraph{4. Performance \& Latency Trade-offs.}
The protocol adds overhead during commit: 17 sequential stage checks, double freshness validation, and prospective unit construction (32.12~$\mu$s in single-threaded Python). In high-throughput environments, this overhead requires storage-level optimizations, such as single-pass SQL transaction procedures or batched multi-key compare-and-swap operations.

\section{Related Work}
\label{sec:related-work}

\paragraph{Agent memory.}
Generative Agents and MemGPT organize reflection, retrieval, and long-term
context~\cite{park2023generative,packer2023memgpt}; LoCoMo, A-MEM, Mem0, and
MemOS study long-horizon evaluation, consolidation, scalable memory, and
versioned management~\cite{maharana2024locomo,xu2025amem,chhikara2025mem0,li2025memos}.
These systems motivate governed memory. As part of the broader Persistent Cognitive
Identity (\PCI) framework~\cite{pci2026stand}, \CK addresses the narrower question of
how any typed component version becomes the authoritative branch head.

MemTX is the closest agent-memory transaction system: it stages
evidence-bearing beliefs under a snapshot, gates actions, and propagates
repairs after retraction~\cite{li2026memtx}. \CK applies an activation boundary
to typed persistent state and specifies pre-state authority, complete-head
equality, stable four-way outcomes, writer epochs, and forward restoration.
MemTxn supplies source-supported memory admission, visible-version selection,
snapshots, and journal-based recovery~\cite{cui2026memtxn}. MemTX's unit is a
belief commit and MemTxn's is a source-supported memory update; \CK specifies a
typed branch-head transition and its pre-state authority boundary.

\paragraph{Transactions and retries.}
Optimistic concurrency control validates read assumptions at commit, database
transactions provide atomic durability, and replicated terms fence stale
leaders~\cite{kung1981optimistic,gray1992transaction,ongaro2014search}. For
linearizable retries, RIFL (Reusable Infrastructure for Linearizability)
combines unique request identifiers, atomic completion records, and safe
reclamation~\cite{lee2015rifl}. \CK composes
these mechanisms into an agent-state contract: the accepted unit also binds
typed state, authority, evidence, lineage, effects, and a calibrated receipt.
Commit-time authorization for LLM agents similarly exposes the gap between
temporary authority and durable effects~\cite{santosgrueiro2026committime}; \CK
places that check inside a persistent-state transition.

\paragraph{Effects and evidence.}
Cordon defines task-level semantic transactions with staged effects,
delegated authorization, compensation, and audit evidence~\cite{chen2026cordon}.
Its task boundary complements \CK's branch-head boundary: an outbox intent may
be a \CK component, but \CK cannot make an unrelated remote action atomic.
PROV-DM, JSON-LD, RDFC-1.0, and Data Integrity provide provenance,
canonicalization, and cryptographic proof formats
~\cite{w3c2013provdm,w3c2020jsonld,w3c2024rdfcanon,w3c2025dataintegrity}.
Those standards can show derivation and integrity; the activation protocol is
still needed to determine which validly encoded proposal became authoritative.

\section{Conclusion}
\label{sec:conclusion}

For persistent agents, state retention is not authority. Infrastructural continuity requires an authorized lineage of accepted branch heads. \CK makes the transition to authority explicit: untrusted components propose typed candidates off-commit; a deterministic control plane validates an exact predecessor head and pre-state authority (or typed absence and genesis context); a short activation transaction records one stable disposition; and only \textsf{Commit} advances the branch head. The protocol contract addresses linearizable retries, effect uniqueness, writer epoch fencing, schema migration, and forward restoration without expanding the kernel's trusted semantic scope.

Our safety propositions are conditional on explicit serialization, access-control, cryptographic, and durability assumptions. In a depth-seven finite abstraction, an executable bounded model explores 2,808,230 reachable states and 5,526,474 state-changing transitions, finding zero encoded invariant violations and reaching 100\% of named coverage witnesses. The model validates logical protocol consistency within its finite bounds; it does not establish unbounded mathematical correctness, physical storage driver conformance, or throughput performance under hardware faults. The core contribution of this work is the activation contract itself—providing a principled systems foundation that separates stored retention from authoritative reachability and distinguishes cryptographic receipt signatures from durable transactional inclusion.

\let\oldthebibliography\thebibliography
\let\endoldthebibliography\endthebibliography
\renewenvironment{thebibliography}[1]{%
  \begin{oldthebibliography}{#1}%
    \fontsize{6.0}{6.8}\selectfont
    \setlength{\itemsep}{0.0pt plus 0.1pt}%
    \setlength{\parskip}{0pt}%
}{%
  \end{oldthebibliography}%
}

\enlargethispage{3\baselineskip}
\bibliographystyle{unsrt}
\bibliography{references}

\appendix
\small
\section{Normative Receipt Semantics}
\label{app:receipt-semantics}

\subsection{Stage Sequences and Disposition Mapping}

This appendix fixes the stage order and minimum contents left implicit in the
main-text receipt abstractions. Preparation and activation checks evaluate two
strictly ordered, closed sequences:

\noindent\textbf{Preparation Stage Order (\textsf{PrepStage}):}
\[
\begin{aligned}
\textsf{RawScreen} &\prec \textsf{Decode} \prec \textsf{Canonicalize} \\
&\prec \textsf{ProposalAuth} \prec \textsf{StaticValidation} \\
&\prec \textsf{TargetLookup} \prec \textsf{AcquisitionAuth} \\
&\prec \textsf{RequirementCoverage} \prec \textsf{WitnessAuth} \\
&\prec \textsf{AcquisitionTime} \prec \textsf{CandidateDerivation} \\
&\prec \textsf{PolicyAuthorization} \prec \textsf{Ready}.
\end{aligned}
\]

\noindent\textbf{Activation Check Order (\textsf{ActCheck}):}
\[
\begin{aligned}
&\textsf{TargetLookup} \prec \textsf{Allocation} \prec \textsf{ExpectedHead} \\
&\quad \prec \textsf{Lifecycle} \prec \textsf{HandoffRevision} \\
&\quad \prec \textsf{AcquisitionAuth} \prec \textsf{RequirementCoverage} \\
&\quad \prec \textsf{VersionBinding} \prec \textsf{WitnessBinding} \\
&\quad \prec \textsf{CandidateBinding} \prec \textsf{EffectBinding} \\
&\quad \prec \textsf{FreshnessInitial} \prec \textsf{Authorization} \\
&\quad \prec \textsf{Decision} \prec \textsf{AuthorityTransition} \\
&\quad \prec \textsf{FreshnessFinal} \prec \textsf{WellFormedness}.
\end{aligned}
\]

In both pipelines, TargetLookup requires existence for existing-branch kinds
and absence plus serialized $\xi_k$ for BranchCreate. The latter's preparation
derives from genesis inputs and runs PolicyAuthorization against $\xi_k$;
activation ExpectedHead accepts only $\mathsf{Absent}$. Existing kinds use the
current predecessor $X,\Gamma$ and $\mathsf{Present}(d_h)$.

An outcome-eligible failure at position $j$ records
\begin{equation}
v_i=\begin{cases}
\mathsf{Pass},&i<j,\\
\mathsf{Fail}(r)\text{ or }\mathsf{Missing}(r),&i=j,\\
\mathsf{NE},&i>j,
\end{cases}
\label{eq:first-failure-vector}
\end{equation}
where \textsf{NE} means not evaluated. A verifier never asks a failing
predicate to pass. Known permanent failures precede a trusted Missing, so
unavailability cannot mask a rejection. Malformed outer framing, failed outer
authentication, and invalid allocation are protocol errors: they create
neither $O[pkey]$ nor a receipt.

\begin{table*}[t]
\centering
\caption{The four proposal-receipt variants and their terminal semantics.}
\label{tab:receipt-variants}
\scriptsize
\begin{tabularx}{\textwidth}{L{0.7cm} L{2.35cm} L{3.15cm} L{2.0cm} >{\raggedright\arraybackslash}X}
\toprule
\textbf{Tag} & \textbf{Terminal location} & \textbf{Stage/reason rule} &
\textbf{Builder} & \textbf{Durable effect} \\
\midrule
$r_P$ & Preparation & First failed or trusted-Missing preparation stage;
later stages are \textsf{NE} & $rb_P$ & Insert one owner-bound
$O[pkey]$ and $\mathcal R[d_{r_P}]=r_P$; no candidate becomes authoritative. \\
$r_A$ & Serialized activation & First failed activation check; earlier checks
pass and later checks are \textsf{NE} & $rb_A$ & Insert one
$O[pkey]$ and $\mathcal R[d_{r_A}]=r_A$; $X,B,\Gamma,E$ are unchanged. \\
$r_Q$ & $\mathsf{Decision}=\mathsf{Quarantine}$ & Earlier checks pass,
Decision records Quarantine, later checks are \textsf{NE} & $rb_Q$ & Insert
$O[pkey]$, $\mathcal R[d_{r_Q}]=r_Q$, and unreachable
$Q[pkey]$ only. \\
$r_C$ & All activation checks Pass & $P_\chi$ passes
\textsf{WellFormedness}; the terminal result is Commit & $rb_C$ & Store
$\mathcal R[d_{r_C}]=r_C$ and install the
complete accepted unit of Equation~\ref{eq:atomic-accepted-unit}. \\
\bottomrule
\end{tabularx}
\end{table*}

For $T\in\{P,A,Q,C\}$, $rb_T$ canonically encodes and signs exactly $r_T$,
computes $d_{r_T}$, and uses the sole receipt-store rule
$\mathcal R[d_{r_T}]=r_T$. No other proposal-receipt builder or receipt key
exists. Table~\ref{tab:closed-reasons} is the closed stage--reason map.

\begin{table*}[t]
\centering
\caption{Closed V1 stage--reason map. ``Limit'' abbreviates
\textsf{CanonicalizationLimit}; a stage vector disambiguates repeated reasons.}
\label{tab:closed-reasons}
\scriptsize
\begin{tabularx}{\textwidth}{>{\raggedright\arraybackslash}X >{\raggedright\arraybackslash}X}
\toprule
\textbf{Preparation stage $\mapsto$ allowed reason} &
\textbf{Activation check $\mapsto$ allowed reason} \\
\midrule
\textsf{RawScreen}$\mapsto\{$\textsf{Limit}$\}$;
\textsf{Decode}$\mapsto\{$\textsf{DecodeFailure, Limit}$\}$;
\textsf{Canonicalize}$\mapsto\{$\textsf{CanonicalizationFailure, Limit}$\}$;
\textsf{ProposalAuth}$\mapsto\{$\textsf{ProposalAuthentication}$\}$;
\textsf{StaticValidation}$\mapsto\{$\textsf{MalformedProposal, EffectManifest,
DuplicateEffectInProposal}$\}$;
\textsf{TargetLookup}$\mapsto\{$\textsf{UnknownTargetAtPreparation,
CreationContextMissing, BranchAlreadyExists}$\}$;
\textsf{AcquisitionAuth}$\mapsto\{$\textsf{UntrustedRequirementResult,
AcquisitionAuthentication}$\}$;
\textsf{RequirementCoverage}$\mapsto\{$\textsf{RequirementCoverage}$\}$;
\textsf{WitnessAuth}$\mapsto\{$\textsf{WitnessAuthentication}$\}$;
\textsf{AcquisitionTime}$\mapsto\{$\textsf{ExpiredAtPreparation,
EvidenceUnavailable}$\}$;
\textsf{CandidateDerivation}$\mapsto\{$\textsf{CandidateDerivation}$\}$;
\textsf{PolicyAuthorization}$\mapsto\{$\textsf{PreparationUnauthorized}$\}$.
&
\textsf{TargetLookup}$\mapsto\{$\textsf{TargetMissingAtActivation,
TargetAlreadyExistsAtActivation, CreationContextMissingAtActivation}$\}$;
\textsf{Allocation}$\mapsto\{$\textsf{AllocationRevoked}$\}$;
\textsf{ExpectedHead}$\mapsto\{$\textsf{StaleHead}$\}$;
\textsf{Lifecycle}$\mapsto\{$\textsf{OrdinaryLifecycle, HandoffLifecycle}$\}$;
\textsf{HandoffRevision}$\mapsto\{$\textsf{StaleHandoffRevision, StaleDirSeq,
TargetWriter, StaleEpoch}$\}$;
\textsf{AcquisitionAuth}$\mapsto\{$\textsf{AcquisitionAuthentication}$\}$;
\textsf{RequirementCoverage}$\mapsto\{$\textsf{RequirementCoverage}$\}$;
\textsf{VersionBinding}$\mapsto\{$\textsf{StaleVersion}$\}$;
\textsf{WitnessBinding}$\mapsto\{$\textsf{WitnessBinding}$\}$;
\textsf{CandidateBinding}$\mapsto\{$\textsf{CandidateBinding}$\}$;
\textsf{EffectBinding}$\mapsto\{$\textsf{EffectManifest, EffectScope,
RetiredEffect, DuplicateEffect}$\}$;
\textsf{FreshnessInitial, FreshnessFinal}$\mapsto\{$\textsf{ExpiredAtActivation,
StaleHead, StaleEpoch, StaleDirSeq, StaleAuthority, StaleVersion,
StaleCreationContext, TargetAppearedAtActivation}$\}$;
\textsf{Authorization}$\mapsto\{$\textsf{Unauthorized}$\}$;
\textsf{Decision}$\mapsto\{$\textsf{PolicyReject, PolicyQuarantine}$\}$;
\textsf{AuthorityTransition}$\mapsto\{$\textsf{InvalidAuthorityTransition}$\}$;
\textsf{WellFormedness}$\mapsto\{$\textsf{InvalidSuccessor}$\}$.
\\
\bottomrule
\end{tabularx}
\end{table*}

At activation \textsf{TargetLookup}, \textsf{BranchCreate} evaluates target presence before creation context: $B[k]\neq\mathsf{Absent}$ yields \textsf{TargetAlreadyExistsAtActivation}; $B[k]=\mathsf{Absent}\land\Xi[k]\text{ missing}$ yields \textsf{CreationContextMissingAtActivation}.
Freshness checks evaluate in deterministic intra-stage order. For existing branches:
(1)~$t>\text{deadline}$ yields \textsf{ExpiredAtActivation};
(2)~head mismatch yields \textsf{StaleHead};
(3)~epoch mismatch yields \textsf{StaleEpoch};
(4)~$dirSeq$ mismatch yields \textsf{StaleDirSeq};
(5)~authority/revocation/policy mismatch yields \textsf{StaleAuthority};
(6)~pinned version mismatch yields \textsf{StaleVersion}. For \textsf{BranchCreate}:
(1)~$t>\text{deadline}$ yields \textsf{ExpiredAtActivation};
(2)~post-lookup target appearance ($B[k]\neq\mathsf{Absent}$) yields \textsf{TargetAppearedAtActivation};
(3)~creation context ($\Xi[k]$ missing/altered) yields \textsf{StaleCreationContext};
(4)~genesis profile/allocator version mismatch yields \textsf{StaleVersion}.
Every reason selects Reject ($rb_A$), requires preparation and prefix $\Pi_{j-1}$, forbids later fields, and leaves $X,B,\Gamma,\Xi,E$ unchanged. Earlier stages dominate later stages.

At preparation, a Fail selects Reject/$rb_P$ and only
\textsf{EvidenceUnavailable} is Missing, selecting Defer/$rb_P$. At activation,
every reason selects Reject/$rb_A$ except \textsf{PolicyQuarantine}, which
selects Quarantine/$rb_Q$; all-pass through \textsf{WellFormedness} selects
Commit/$rb_C$. Thus Algorithm~\ref{alg:activate} has no other terminal path.
Adding a stage or reason requires a new receipt schema version. An exact retry
returns the indexed result; any changed attempt under the same $pid$ conflicts.

Lifecycle maintenance is typed separately. Its closed actions are
\textsf{Prepare}, \textsf{Fence}, \textsf{Retarget}, and \textsf{Abort}. Its
closed results are \textsf{Applied}, \textsf{StaleHead},
\textsf{StaleRevision}, \textsf{StaleDirSeq}, \textsf{WriterEpoch},
\textsf{TargetConflict}, \textsf{InvalidState}, and \textsf{Unauthorized}.
$rb_L=\mathsf{Build}(\mathsf{LifecycleReceiptV1})$ binds the action identifier,
pre/post directory tuples, optional input/output revision digests, result,
trusted time, and signature. It is stored in $\mathcal R$ and indexed by the
typed $H_R$ key defined in Appendix~\ref{app:lifecycle-semantics}; it never
creates $O[pkey]$ and is not a fifth proposal disposition.
Its digest is $d_{r_L}=d_{\mathsf{LifecycleReceiptV1}}(r_L)$ under
Equation~\ref{eq:typed-digest-appendix}. Replay recomputes the action and its
pre/post tuples; inclusion requires a certified state containing $\mathcal R$,
$H_R$, and, for Applied, the assigned $F$ and $B$ records.

Lifecycle field presence is action-indexed. Every $r_L$ binds its common
fields, authenticated request, requested pre-directory tuple, result, and exact
post tuple. For \textsf{Applied}, the input/output revision states are
Prepare $\bot\!\to$ Prepared, Fence Prepared$\to$SourceFenced, Retarget
SourceFenced$\to$SourceFenced, and Abort Prepared$\to$Aborted. For an allowed non-Applied
result, post equals pre, output is forbidden, and neither $F$ nor $B$ changes;
Prepare forbids input, while the other actions require the requested input.
Let
\begin{align*}
S=\{&\textsf{StaleRevision},\textsf{StaleDirSeq},\textsf{WriterEpoch},\\
&\textsf{InvalidState},\textsf{Unauthorized}\}.
\end{align*}
The exact failure sets are
$S\setminus\{\textsf{StaleRevision}\}\cup
\{\textsf{StaleHead},\textsf{TargetConflict}\}$ for Prepare,
$S\cup\{\textsf{StaleHead}\}$ for Fence,
$S\cup\{\textsf{TargetConflict}\}$ for Retarget, and $S$ for Abort. No other
action/result pair is schema-valid.

\subsection{Field presence and early input binding}

After owner authentication, the gateway computes
\begin{equation}
d_{raw}=d_{\mathsf{WireInputV1}}(rawBytes).
\end{equation}
An early RawScreen, Decode, or Canonicalize receipt binds $d_{raw}$ but omits
$d_\tau$; there is not yet a canonical proposal to hash. Let $j$ be the
terminal stage. A receipt requires the common fields, the terminal
Fail/Missing observation, and exactly the fields whose producing stages
successfully completed before $j$. It forbids fields first produced at $j$ or
later. Thus later \textsf{NE} stages impose no field obligation. The only
profile choice is raw retention: $d_{raw}$ is required when
\textsf{rawRetention=true} and forbidden otherwise. Absence is encoded by the
typed variant, never a zero digest.

\begin{table*}[t]
\centering
\caption{Stage-indexed field production. A field is required exactly after its
producer passes and forbidden beforehand.}
\label{tab:receipt-fields}
\scriptsize
\begin{tabularx}{\textwidth}{L{2.8cm} L{3.3cm} >{\raggedright\arraybackslash}X}
\toprule
\textbf{Field group} & \textbf{Producer that must Pass} & \textbf{Contents} \\
\midrule
Common & Outcome-eligible owner authentication & Type/version, subject/branch, principal,
allocation scope, $pid$, tag, disposition, stage vector, reason, issue time,
kernel key and signature \\
Raw input & Request capture under the signed profile & $d_{raw}$; profile makes
it required or forbidden, never optional \\
Canonical proposal & \textsf{Canonicalize} & $d_\tau$, typed
$\mathsf{Present}(d_h)$/$\mathsf{Absent}$ expectation, kind reference, and
unverified signature/credential bytes \\
Authenticated proposal & \textsf{ProposalAuth} & Verified signer, owner,
principal, allocation scope, and proposal-authentication result \\
Acquisition & \textsf{RequirementCoverage} (preparation) & Ordered requirements/results, witness and
dependency-version commitments \\
Candidate package & \textsf{CandidateDerivation} & Candidate root, pinned interpreter,
candidate-seal key/signature, effect manifest, and direct applicable $d_\xi$ or
open SourceFenced-revision digest \\
Activation prefix $\Pi_j$ & Corresponding $\mathsf{ActCheck}$ stage &
Existing row or absence plus $d_\xi$; allocation; exact Present/Absent
expectation; lifecycle; open revision;
acquisition signer; exact cover; versions; witness binding; candidate binding;
effect binding; initial freshness; authorization; Decision; authority-after;
final freshness---each introduced only by its same-named successful stage \\
Prospective unit & \textsf{BuildProspective} after \textsf{FreshnessFinal} &
Produced successor/core/lineage inputs, receipt-independent branch-row plan ($\widehat B_\chi$), receipt-independent assignment plan ($\widehat\Delta_\chi$), and deterministic parameters for receipt-dependent finalization \\
Finalized Commit & \textsf{WellFormedness} Pass & Commit receipt/digest,
complete head, outcome, $H_R$, effects, receipt store, and directory binding \\
\bottomrule
\end{tabularx}
\end{table*}

Consequently, a \textsf{CandidateDerivation} failure forbids candidate fields.
Every activation receipt requires the completed preparation package, but an
early TargetLookup or ExpectedHead failure contains only its successful
activation prefix and terminal observation---never later freshness,
authorization, Decision, authority-after, or successor fields. A Decision
Reject or Quarantine binds that terminal policy result; it does not manufacture
an AuthorityTransition prefix. A BranchCreate prefix binds absence and
$d_\xi$, never a predecessor head, state, or authority. An InvalidSuccessor
receipt may bind the produced prospective fields but forbids every finalized
Commit field.

Exact evidence coverage is a bijection between the signed ordered requirement
sequence and acquired results: no omission, duplicate, substitution,
injection, or forbidden reordering is accepted. Each result is either a signed
witness or an authenticated \textsf{Missing}. A candidate seal authenticates
these bindings; it is neither pre-state authorization nor proof of durable
commit.

\subsection{Verification levels}

Verification is monotone but not automatic:
\begin{enumerate}
  \item \textbf{Structural authenticity} checks the receipt variant,
  canonical syntax, typed digests, signature, key scope, and internal bindings.
  \item \textbf{Kernel-attested reason} additionally establishes that the
  authenticated kernel reported the declared first terminal stage. It does not
  independently establish that the predicate was evaluated correctly.
  \item \textbf{Independent replay} recomputes each predicate through the
  terminal stage from retained inputs and pinned versions. For Commit it also
  recomputes candidate derivation, the authority transition, manifest,
  successor core, and lineage. Replay is unavailable when any required object
  or interpreter is absent.
  \item \textbf{Durable inclusion} verifies a certified snapshot, commit
  manifest, or replicated-log proof containing $O[pkey]$ and the receipt; Commit
  also requires the installed head and lineage. An archive proof terminates at
  a retirement root in such a certified state. A signature or locator alone is
  not inclusion.
\end{enumerate}
A conforming verifier first completes level~1, reports level~2 only for a
trusted kernel key, attempts level~3 only with a complete replay package, and
reports level~4 only with an inclusion proof. Thus a signed receipt from an
aborted storage transaction cannot be upgraded to a committed fact.

\section{Typed Lineage and Cycle-Free Construction}
\label{app:lineage-proofs}

\subsection{Acyclic Hash Graph and Construction Order}

\begin{table*}[t]
\centering
\caption{Exact typed construction dependencies. Every input appears earlier.}
\label{tab:construction-dependencies}
\fontsize{6.3}{7.0}\selectfont
\renewcommand{\arraystretch}{0.92}
\begin{tabularx}{0.99\textwidth}{L{0.45cm} L{2.75cm} >{\raggedright\arraybackslash}X >{\raggedright\arraybackslash}X}
\toprule
& \textbf{Object and domain} & \textbf{Immediate inputs} &
\textbf{Downstream references} \\
\midrule
1 & $d_\xi$ / \textsf{CreationContextV1}; $\Xi[k]$ & Namespace, allocator and
policy versions, $dirSeq$, authority root, genesis profile & BranchCreate
proposal, seal, authorization, freshness \\
2 & $d_{f_j^{SourceFenced}}$ / existing \textsf{HandoffRevisionV1}; $F[hid,j]$
& Prior digest, SourceFenced state, branch/head, target, epochs, directory tuple,
context versions & HandoffActivate proposal, seal, admission \\
3 & $d_\tau$ / \textsf{TransitionV1} & $pkey,k$, Present/Absent expectation,
$\chi$, operations, authority changes, dependencies, requirements, validity,
kind reference, and direct applicable $d_\xi$ or $d_{f_j^{SourceFenced}}$ &
Acquisition and candidate derivation \\
4 & $d_W$ / \textsf{AcquisitionV1} & $d_\tau$, ordered requirements/results,
signer/key identifiers and pinned versions; detached authentication is verified & Candidate seal \\
5 & $d_{X'},d_{\Gamma'},d_{M_E}$ / \textsf{StateRootV1},
\textsf{AuthorityStateV1}, \textsf{EffectManifestV1} & Ordered typed successor
component digests; ordered authority entries; ordered $(opIndex,eid,opDigest)$,
derived from initial inputs with $d_\tau,d_W$ and the interpreter & Seal, head core \\
6 & $g,d_g$ / \textsf{BranchGenesisV1} & $d_\tau,d_\xi$, absence
proof, initial roots/writer/zero counters, optional source provenance & Genesis parent \\
7 & $d_{seal}$ / \textsf{CandidateSealV1} & $d_\tau,d_W$, interpreter,
component/effect roots, and direct applicable $d_\xi$ or open-revision digest & Receipt \\
8 & $h_c,d_{h_c}$ / \textsf{HeadCoreV1} & Exact Equation~\ref{eq:compact-head} fields:
root, typed parent, versions, counters, time, kind, and closed $auxRef$ & Output revision, lineage \\
9 & $d_{f_{j+1}^{TargetActive}}$ / new \textsf{HandoffRevisionV1} &
$d_{f_j^{SourceFenced}},d_{h_c}$, exact target, epochs, directory tuple and
context versions & Lineage, receipt, HResult \\
10 & $\ell,d_\ell$ / \textsf{LineageEdgeV1}; $L[\lambda]$ & Parent, $d_{h_c},d_\tau,pkey$,
versions, and applicable input/output handoff digests & Commit receipt \\
11 & $r_C,d_{r_C}$ / \textsf{CommitReceiptV1}; $\mathcal R[d_{r_C}]$ &
$pkey,d_\tau,d_W,d_{seal}$, parent, $d_{h_c},d_\ell$, all-Pass vector,
$\widehat P_\chi$ prospective fields, effect/authority and handoff bindings & Complete head, accepted records \\
12 & $h',d_{h'}$ / \textsf{CompleteHeadV1} & $h_c,d_\ell,d_{r_C}$ & Outcome, branch row, effect, handoff result, directory binding \\
13 & \textsf{OutcomeV1}; $O[pkey]$ & $pkey$, Commit tag, $d_{r_C},h'$ & Retry, inclusion \\
14 & \textsf{EffectRecordV1}; $E[eid]$ & $eid,pkey$, operation index/digest,
$d_{r_C},d_{h'}$ & Inclusion, duplicate exclusion \\
15 & \textsf{HandoffResultV1}; $H_R[K_H]$ & $pkey,hid$, input/output
revision digests, $d_{r_C},h'$ & Retry, inclusion \\
16 & \textsf{BranchRowV1}; $B[k]$ & $k,h'$, status, writer, epoch, $dirSeq$,
open pointer, immutable $createdFrom$ & Authoritative lookup \\
17 & \textsf{DirectoryBindingV1}; $G[K_G]$ &
$k,pkey,\chi$, pre/post $dirSeq$, allocator/context versions, $d_{r_C},d_{h'}$ &
Freshness, inclusion \\
18 & $i$ / \textsf{InclusionProofV1}, later & Certified state containing rows
11--17 & External verifier; never an input to $h'$ \\
\bottomrule
\end{tabularx}
\end{table*}

Here $K_H=\mathsf{HandoffResultKeyV1}(pkey,hid)$ and
$K_G=\mathsf{DirectoryKeyV1}(k,dirSeq')$; these are definitions, not aliases
for differently keyed records.

Every content reference uses a type- and version-separated digest
\begin{align}
d_T(y)=H\bigl(&\operatorname{LP}(\mathtt{CK})\parallel
\operatorname{LP}(T)\parallel\nonumber\\
&\operatorname{u32}(v)\parallel
\operatorname{LP}(\operatorname{Canon}_T(y))\bigr),
\label{eq:typed-digest-appendix}
\end{align}
where \textsf{LP} is an unambiguous length prefix. Logical proposal and effect
identifiers are allocated names, not content digests.

Equation~\ref{eq:compact-head} is the sole definition of immutable head core,
complete head, and branch row; this appendix introduces no shortened head.
The closed parent type is
\begin{align}
\mathsf{LineageParent}={}&\mathsf{AcceptedParent}(d_h)\nonumber\\
&\mid\mathsf{GenesisParent}(d_g).
\end{align}
Exactly one branch-genesis edge uses the second variant. Every later edge uses
the complete current head. A checkpoint, migration input, or handoff source is
an auxiliary typed reference and can never occupy the parent field.

A \textsf{LineageEdgeV1} binds $sid,bid$, transition kind, lineage parent,
$d_{h_c}$, $d_\tau$, the proposal allocation key, policy and authority
versions, epoch and sequence, plus the transition-kind-specific source,
checkpoint, migration, or handoff references. A \textsf{BranchGenesisV1}
binds the authenticated creation request, absent-directory proof, initial state
and admission roots, writer, epoch~0, sequence~0, and optional source
provenance. Source provenance does not create cross-branch ancestry. In
Table~\ref{tab:construction-dependencies}, every \textsf{V1} name has version~1,
that exact name as its domain tag, and its type-specific canonical encoder under
Equation~\ref{eq:typed-digest-appendix}. Component roots carry their declared
type/schema domain. Rows 3--10 and 12 are content-addressed by their displayed
typed digest; rows 1--2, 11, and 13--17 use exactly the displayed map key. Map
keys are allocated typed names, not hidden hash inputs.

For HandoffActivate, \textsf{TransitionV1} directly binds
$d_{f_j^{SourceFenced}}$; \textsf{AcquisitionV1} binds it transitively through
$d_\tau$, and the seal binds it directly. The revision must be the current open
SourceFenced revision or \textsf{HandoffRevision} fails stale. $h_c.auxRef$
also names this input but never the output or seal. The head core has no direct
seal dependency: $h'$ binds $d_{seal}$ transitively through $d_{r_C}$. The
same rule applies to BranchCreate with $d_\xi$: direct in the transition and
seal, transitive through $d_\tau$ in the acquisition, and revalidated from
$\Xi[k]$ at both freshness checks. The
output revision binds the open input and $d_{h_c}$; successful activation then
clears the pointer while retaining both revisions in $F$. The order is
\begin{align}
(d_\xi?,d_{f_j^{SourceFenced}}?)&\prec d_\tau\prec d_W
\prec(X',\Gamma',M_E)\nonumber\\
&\prec d_g?\prec d_{seal}\prec h_c
\prec d_{f_{j+1}^{TargetActive}}?\nonumber\\
&\prec\ell\prec r_C\prec h'\prec d_{h'}\nonumber\\
&\prec(O,E,H_R?,B',G')\prec\mathsf{Publish}\prec i.
\label{eq:construction-order}
\end{align}
The receipt binds the prospective unit $\widehat P_\chi$, not the complete head; the
complete head $h'$ then binds $d_{r_C}$, producing $d_{h'}$. The lineage edge likewise binds $h_c$
and typed predecessor, not the complete successor. Therefore every digest edge
points left in Equation~\ref{eq:construction-order}, and the audited graph is
acyclic. Outcome, effect, handoff-result, branch, and directory records point
only to named earlier digests; no earlier object points back. Inclusion
evidence is constructed only after publication. Atomic publication changes
visibility, not this hash order.

\subsection{Proof sketches}

\begin{proof}[Proposition~\ref{prop:stable-outcome}]
Every outcome-writing path first authenticates the proposal allocation and
revalidates it in the proposal-scope serialization domain. Invalid outer
requests never write an outcome. The first authorized write uniquely inserts
$O[pkey]$; a compatible retry returns it, an incompatible retry conflicts, and
a reclaimed identifier is rejected by the monotone watermark. A second durable
disposition is therefore impossible under A1--A4 and A8.
\end{proof}

\begin{proof}[Proposition~\ref{prop:single-successor}]
The seal binds the proposal, evidence, interpreter, typed Present/Absent
expectation, candidate root, and effect manifest. Activation serializes the
branch key and compares either the complete head or continued absence plus
$d_\xi$. The first Commit changes that serialization point atomically; every
contender then fails exact-head or absence admission. Thus at most one sealed
candidate becomes the exact successor or genesis under A1--A4.
\end{proof}

\begin{proof}[Proposition~\ref{prop:effect-once}]
The operation--manifest bijection rejects an effect identifier repeated within
a proposal. Across proposals, Commit uniquely inserts $E[eid]$ while holding
its scope. Reclamation advances the scope watermark before deleting online
rows, and scopes are never recycled. Hence deletion cannot reopen an accepted
identifier under A1, A4, and A8.
\end{proof}

These are conditional serialization arguments. They establish neither content
truth nor correctness of policy, canonicalization, cryptography, or storage
implementations.

\section{Total Lifecycle, Handoff, and Restoration Semantics}
\label{app:lifecycle-semantics}

\subsection{Handoff Union Types and Kind-Specific Admission}

Equation~\ref{eq:compact-head} is normative for $B$. In particular, $dirSeq$
lives only in $B$; a revision records $dirSeqIn,dirSeqOut$ as snapshots.
$createdFrom\in\{\mathsf{None},\mathsf{SourceProvenance}(d_h)\}$ is assigned
at genesis and copied byte-for-byte thereafter. Pointer activity and revision
state are distinct closed types:
\begin{equation}
\begin{aligned}
\mathsf{HPtr}&=\bot\mid\mathsf{Open}(hid,d_f),\\
\mathsf{FState}&=\mathsf{Prepared}\mid\mathsf{SourceFenced}\mid{}\\[-1mm]
&\qquad\mathsf{Aborted}\mid\mathsf{TargetActive}.
\end{aligned}
\label{eq:handoff-union}
\end{equation}

\begin{table*}[t]
\centering
\caption{Transition-kind-specific head admission within the single ordered
$\mathsf{ActCheck}$ vector.}
\label{tab:admission-by-kind}
\fontsize{6.3}{7.0}\selectfont
\renewcommand{\arraystretch}{0.92}
\begin{tabularx}{\textwidth}{L{2.1cm} >{\raggedright\arraybackslash}X >{\raggedright\arraybackslash}X}
\toprule
\textbf{Kind} & \textbf{Serialized $G^{kind}_{head}$ precondition} &
\textbf{Parent and disposition} \\
\midrule
Ordinary, Migration, Restoration &
$B=\langle h,\mathsf{Active},w,e,d,z,c\rangle$; exact $expected=h$, writer
$w$, epoch $e$, and $dirSeq=d$; an Open $z$ must reference Prepared, never
SourceFenced. &
$\mathsf{AcceptedParent}(d_h)$; one of the four proposal dispositions. \\
HandoffActivate &
$B=\langle h_s,\mathsf{Frozen},\bot,e_c,d,
\mathsf{Open}(hid,d_{f_j^{SourceFenced}}),c\rangle$; exact fenced head, target writer,
$dirSeq=d$, input revision and current context; the revision separately binds
$reservedEpoch=e_r$, and admission requires the proposal to bind both
$e_c,e_r$ with $e_c=e_r$. &
$\mathsf{AcceptedParent}(d_{h_s})$; an ordinary Commit/Reject/Quarantine/Defer
outcome under $O[pkey]$. \\
BranchCreate &
$B[k]$ absent, $expected=\mathsf{Absent}$, and exact serialized $\Xi[k]=\xi_k$;
authenticated allocation/absence proof and zero initial counters. &
$\mathsf{GenesisParent}(d_g)$; ordinary Commit with kind
\textsf{BranchCreate}. \\
\bottomrule
\end{tabularx}
\end{table*}

An ordinary Commit during Prepared is nonconflicting but makes that revision's
source-head binding stale. Fence then returns \textsf{StaleHead}; Abort followed
by a fresh Prepare is required before fencing. Any change to target, epoch,
$dirSeq$, input revision, $d_G$, or the current admission versions similarly
invalidates an older target package.

\subsection{Authoritative State Postconditions}

Let $b=\langle h,s,w,e,d,z,c\rangle$ follow the field order of
Equation~\ref{eq:compact-head}; $b[a\leftarrow x]$ copies $b$ and explicitly
replaces field $a$. The HResult map is $H_R$. Let
$K_L=\mathsf{LifecycleKeyV1}(hid,lid)$ and
$K_H=\mathsf{HandoffResultKeyV1}(pkey,hid)$ be disjoint typed keys;
$lid$ and $pid$ are owner-allocated, non-recycled names. A unique insert
makes each $H_R$ key immutable: an exact retry returns the stored value and a
variant conflicts. ``Append $r_L$'' below means
$\mathcal R[d_{r_L}]=r_L$ in the same atomic unit.
For compactness, let
$\mathcal S=\langle X,\Gamma,rev,issuer,policy,\Xi,h,L,O,E,G\rangle$;
$\mathcal S'=\mathcal S$ explicitly copies every listed family, including $\Xi'=\Xi$.

\begin{table*}[t]
\centering
\caption{Total lifecycle postconditions. Every authoritative family is
assigned explicitly or through the complete accepted unit; $\Xi'=\Xi$ holds for all rows.}
\label{tab:lifecycle-total}
\fontsize{6.3}{7.0}\selectfont
\renewcommand{\arraystretch}{0.92}
\begin{tabularx}{\textwidth}{L{1.2cm} L{2.85cm} >{\raggedright\arraybackslash}X >{\raggedright\arraybackslash}X}
\toprule
\textbf{Action} & \textbf{Exact precondition} &
\textbf{Complete authoritative poststate} & \textbf{Receipt/result records} \\
\midrule
Genesis & $B[k]$ absent; BranchCreate row of Table~\ref{tab:admission-by-kind}
& Install $\mathcal U_{accept}(\mathsf{BranchCreate})$: $X_0,\Gamma_0,g,h_0,L,
E[M_E],G'_\chi$ and
$B'[k]=\langle h_0,\mathsf{Active},w_t,0,0,\bot,c_0\rangle$; $\Xi'=\Xi$;
$F,H_R$ not written. & $O[pkey]=\mathsf{OutcomeV1}(\mathsf{Commit},d_{r_C},h_0)$ and
$\mathcal R[d_{r_C}]=r_C$. \\
Prepare & $b.status=\mathsf{Active}$, $b.writer=w_s$, exact head/epoch/$dirSeq$,
and $b.handoff=\bot$ & $\mathcal S'=\mathcal S$;
$B'=b[dirSeq\leftarrow d+1,handoff\leftarrow
\mathsf{Open}(hid,d_{f_{j+1}^{Prepared}})]$, preserving $createdFrom$; append
$f_{j+1}^{Prepared}$ to $F$ and reserve $e_r=e+1$. & Append
$r_L=rb_L(\mathsf{Applied})$ and
$H_R[K_L]=\mathsf{Lifecycle}(d_{r_L},\bot,d_{f_{j+1}^{Prepared}})$. \\
Abort & Exact $f_j^{Prepared}$; Active, writer $w_s$, epoch $e$, $dirSeq=d$,
and matching Open pointer & $\mathcal S'=\mathcal S$;
$B'=b[dirSeq\leftarrow d+1,handoff\leftarrow\bot]$, preserving
$createdFrom$; append the Aborted revision to $F$ and cancel $e_r$. & Append $r_L$ and
$H_R[K_L]=\mathsf{Lifecycle}(d_{r_L},d_{f_j^{Prepared}},d_{f_{j+1}^{Aborted}})$. \\
Fence & Exact $f_j^{Prepared}$, its final $h$, and active source tuple &
$\mathcal S'=\mathcal S$; preserve $createdFrom$ and set
$B'.status=\mathsf{Frozen}$, $writer=\bot$, $epoch=e_r$, $dirSeq=d+1$, and
$handoff=\mathsf{Open}(hid,d_{f_{j+1}^{SourceFenced}})$; append SourceFenced revision
with old epoch $e$ and new/reserved epoch $e_r$. & Append $r_L$ and
$H_R[K_L]=\mathsf{Lifecycle}(d_{r_L},d_{f_j^{Prepared}},d_{f_{j+1}^{SourceFenced}})$. \\
Retarget & Exact $f_j^{SourceFenced}$; Frozen, writer $\bot$, epoch $e$, $dirSeq=d$,
and matching Open pointer & $\mathcal S'=\mathcal S$;
preserve status, writer, head, and $createdFrom$; set epoch $e+1$, $dirSeq=d+1$, and
$handoff=\mathsf{Open}(hid,d_{f_{j+1}^{SourceFenced}})$; append SourceFenced revision
naming the new target and reserved epoch $e+1$. & Append $r_L$ and
$H_R[K_L]=\mathsf{Lifecycle}(d_{r_L},d_{f_j^{SourceFenced}},d_{f_{j+1}^{SourceFenced}})$. \\
Activate & HandoffActivate row of Table~\ref{tab:admission-by-kind} &
Install $\mathcal U_{accept}(\mathsf{HandoffActivate})$: $X'_T,h',L,E[M_E],
G'_\chi$, append $f_{j+1}^{TargetActive}$, set $\Xi'=\Xi$ and
$B'=\langle h',\mathsf{Active},w_t,e_c,d+1,\bot,c\rangle$; install the
current-or-authorized $\Gamma_T,rev_T,issuer_T,policy_T$. &
$\mathcal R[d_{r_C}]=r_C$, $O[pkey]=\mathsf{OutcomeV1}(\mathsf{Commit},d_{r_C},h')$, and
$H_R[K_H]=\mathsf{Activation}(d_{f_j^{SourceFenced}},\allowbreak
d_{f_{j+1}^{TargetActive}},\allowbreak d_{r_C},\allowbreak h')$: the receipt
digest is identical. \\
Restore & Ordinary-kind row; authenticated checkpoint/proof/mask and optional
pinned migrator & Install $\mathcal U_{accept}(\mathsf{Restoration})$ with
$X'=\operatorname{Restore}_M(X_c,M_\mu(X_k))$, new $h',L,E[M_E],G'_\chi$, $\Xi'=\Xi$ and
$B'=b[head\leftarrow h']$; status, writer, epoch, $dirSeq$, handoff,
$createdFrom$, and protected authority families stay current. & Ordinary
Commit in $O$ and $\mathcal R$; no $H_R$ or $F$ write; checkpoint is auxiliary
lineage, never a parent. \\
Failure & Any authenticated maintenance action whose exact precondition fails
& $\mathcal S'=\mathcal S$, $B'=B$, and $F'=F$; no accepted state,
effect, directory, or lifecycle-revision write. &
Append $r_L=rb_L(q_L)$ and the action-indexed $H_R[K_L]$ from Appendix~A;
the output revision is forbidden. \\
\bottomrule
\end{tabularx}
\end{table*}

Every \textsf{HandoffRevisionV1} binds its predecessor revision, action/state,
branch key, source head/writer, target writer, old/new and reserved epochs,
$dirSeqIn,dirSeqOut$, pinned policy/authority/revocation versions, digests of
$\Gamma,G$, optional $d_{h_c}$, reason, and trusted time. $F[hid,j]$ is
append-only. Maintenance changes only the fields and indexes assigned in
Table~\ref{tab:lifecycle-total}; it never creates an $O$ entry.

The consumed target package is
\begin{align}
P_H=\langle&pkey,d_\tau,d_W,d_{seal},hid,d_{f_j^{SourceFenced}},d_{h_s},\nonumber\\
&w_t,e_c,e_r,dirSeq,policyV,authV,revV,d_\Gamma,d_G\rangle,
\label{eq:target-bindings}
\end{align}
where $e_c$ is the pre-activation canonical epoch and $e_r$ is the separately
reserved successor epoch. Admission requires both to remain exact and equal.
The accepted unit creates the distinct output
$f_{j+1}^{TargetActive}$; Appendix~\ref{app:lineage-proofs} fixes every digest
placement. Precisely, $\Gamma_T=\Gamma_{current}$ when no separately
authorized authority transition exists, and $\Gamma_T=\Gamma'$ only when the
same accepted proposal authorizes and binds $\Gamma'$. The identical rule
applies to revocation, issuer, and policy; the fenced historical head is never
their source.

\subsection{Restoration Path Mask and Protected Semantics}

Restoration binds
\begin{align}
aux_R=\langle&d_{h_c},d_{root_c},d_{h_k},d_{root_k},\nonumber\\
&proof_k,d_\mu,M,componentRoots\rangle
\end{align}
\noindent\begin{minipage}{\columnwidth}
and computes $X'=\operatorname{Restore}_M(X_c,M_\mu(X_k))$. For protected paths
\begin{align}
P=\{&authority,policy,revocation,issuer,\nonumber\\
&writer,epoch,dirSeq,lifecycle,createdFrom\},\nonumber\\
M\cap P&=\varnothing,\qquad X'|_P=X_c|_P.
\label{eq:restore-protected}
\end{align}
Malformed or unauthenticated outer lifecycle requests create no receipt.
\end{minipage}

\section{Deterministic Resource Profile and Artifact}
\label{app:resource-artifact}
\enlargethispage{2\baselineskip}

\begin{table*}[t]
\centering
\caption{Exact counter increments. An occurrence is counted even if later
deduplicated; tests happen before the event that would exceed its bound.}
\label{tab:counter-semantics}
\fontsize{6.3}{7.0}\selectfont
\begin{tabularx}{\textwidth}{L{2.45cm} >{\raggedright\arraybackslash}X L{2.6cm} >{\raggedright\arraybackslash}X}
\toprule
\textbf{Counter} & \textbf{Increment or depth event} &
\textbf{Counter} & \textbf{Increment or depth event} \\
\midrule
\textsf{inputOctets} & Before consuming each primary or supplied-context octet;
duplicate packaged objects count again. &
\textsf{decodeDepth} & Entering a JSON array/object; root depth is 1 and exit
decrements it. \\
\textsf{decodedItems} & Decoder emits each member, array element, or scalar
occurrence in the primary or each supplied context object. &
\textsf{expandedTerms} & JSON-LD expansion emits each node, property, or value
occurrence. \\
\textsf{contextUrls} & Before each JSON-LD Context Processing URL attempt after
base resolution; repeated URLs, repeated references, and cache hits count. &
\textsf{contextDepth} & Entering each logical nested Context Processing call;
root is 1, cache use changes neither entry nor exit. \\
\textsf{expansionDepth} & Entering a recursive JSON-LD expansion call; root is
1 and exit decrements it. &
\textsf{quads} & RDF conversion emits each quad occurrence, before set
deduplication. \\
\textsf{blankNodes} & First allocation or capture of each distinct blank-node
identifier. &
\textsf{nDegreeCalls} & Entry to each RDFC Hash N-Degree Quads invocation. \\
\textsf{permutationBranches} & Before exploring each RDFC permutation
candidate. &
\textsf{issuerEntries} & Before a logical issuer state first becomes reachable,
charge every mapping in that state: a fresh insertion charges one and a clone
charges its full inherited mapping cardinality. Equal mappings in distinct
issuers or clones count again; structural sharing gives no discount. \\
\textsf{pathUnits} & Each identifier, position, or digest token appended to an
N-degree path. &
\textsf{intermediateUnits} & Simultaneously with each increment of
decodedItems, expandedTerms, quads, permutationBranches, issuerEntries, or
pathUnits; hence it is their running sum. \\
\bottomrule
\end{tabularx}
\end{table*}

\subsection{Canonicalization resource profile}

Canonicalization is governed by the signed, versioned profile
\begin{align}
\kappa=\langle&jsonldV,rdfcV,contextDigests,\nonumber\\
&maxContextUrls,maxContextDepth,\nonumber\\
&maxBytes,maxDecodeDepth,\nonumber\\
&maxDecodedItems,maxQuads,\nonumber\\
&maxExpandedTerms,maxExpansionDepth,\nonumber\\
&maxBlankNodes,maxNDegreeCalls,\nonumber\\
&maxPermutationBranches,maxIssuerEntries,\nonumber\\
&maxPathUnits,maxIntermediateUnits\rangle.
\label{eq:resource-profile}
\end{align}
$jsonldV$ pins the 16 July 2020 JSON-LD~1.1 Processing Algorithms and API
Recommendation: \textsf{processingMode=json-ld-1.1}, its Context Processing
and Expansion algorithms, \textsf{ordered=true}, \textsf{rdfDirection=null},
a signed absolute base IRI, and \textsf{produceGeneralizedRdf=false}. Input is UTF-8
\textsf{application/ld+json} containing one JSON object or array; HTML
extraction and ambient bases are forbidden~\cite{w3c2020jsonld,w3c2020jsonldapi}.
Remote contexts are digest-pinned package objects, never network-fetched. Each
reference resolves against the signed base to one exact URL--digest manifest
entry; redirects, negotiation, and unlisted URLs are rejected. Primary and all
supplied context bytes are charged at RawScreen and decoded at Decode. Duplicate
supplied objects are charged separately; repeated references and cache hits do
not recharge bytes but do count logical context-processing attempts. $rdfcV$
pins RDFC-1.0~\cite{w3c2024rdfcanon}. The path is JSON decode, Context
Processing/Expansion, RDF emission, then RDFC canonicalization. Table
\ref{tab:counter-semantics} defines mathematical nonnegative counters, not
implementation allocations.
All cumulative counters reset once per proposal and sum occurrences across its
datasets; they never reset per recursive invocation. Depth counters are active
recursion gauges. \textsf{inputOctets} belongs to RawScreen,
\textsf{decodeDepth}/\textsf{decodedItems} to Decode, and the remaining counters
to Canonicalize.

Every bound is tested before its event; context URL/depth exhaustion terminates
at \textsf{Canonicalize} before lookup or recursive entry, with
\textsf{CanonicalizationLimit} and no partial candidate. For a clone, the full inherited charge is tested before the clone becomes
reachable; for insertion, the unit charge is tested before insertion. Either
failure terminates at \textsf{Canonicalize} with the same
\textsf{CanonicalizationLimit} receipt and no partial issuer state.
Crossing any other limit terminates at its governing \textsf{RawScreen},
\textsf{Decode}, or \textsf{Canonicalize} stage with
\textsf{CanonicalizationLimit}, produces no partial candidate, and binds
$d_{raw}$ rather than a nonexistent $d_\tau$. RDFC-1.0 separately identifies
adversarially expensive graph structure as a denial-of-service risk
~\cite{w3c2024rdfcanon}; this is \emph{canonicalization complexity poisoning},
not a semantic claim that the resulting graph is false. A local wall-clock,
memory, or process watchdog may stop work earlier, but that event is a transient
operational failure with no durable disposition unless the implementation can
translate it into the deterministic counters above.

\subsection{Conditional assumptions}

All safety claims are conditional on:
\begin{enumerate}\fontsize{8.5}{9.5}\selectfont\itemsep=-1pt\parskip=0pt
  \item[\textbf{A1}] \label{assumption:a1}\textbf{Mediation.} No bypass
  credential or API can advance an authoritative head or protocol index.
  \item[\textbf{A2}] \label{assumption:a2}\textbf{Encoding and cryptography.}
  Canonical encodings, typed hash domains, and signature schemes have their
  stated security and interoperability properties.
  \item[\textbf{A3}] \label{assumption:a3}\textbf{Key custody.} Proposal,
  allocator, evaluator, preparation-service, candidate-seal, and kernel keys
  are protected and restricted to their declared principals and scopes.
  \item[\textbf{A4}] \label{assumption:a4}\textbf{Atomic serialization.} The
  complete accepted unit, lifecycle changes, and reclamation changes serialize
  over every named mutable key and are durably all-or-none.
  \item[\textbf{A5}] \label{assumption:a5}\textbf{Complete context.} Policy,
  evaluator, authority, revocation, dependency, allocation, and lifecycle
  versions are locally lockable or transactionally revalidated.
  \item[\textbf{A6}] \label{assumption:a6}\textbf{Trusted time.} Activation
  obtains a conservative commit-time interval or an equivalent storage-enforced
  deadline.
  \item[\textbf{A7}] \label{assumption:a7}\textbf{Lifecycle order.} Branch
  creation, epochs, handoff, and recovery share one order observed by writers.
  \item[\textbf{A8}] \label{assumption:a8}\textbf{Non-recycled scopes.}
  Authenticated allocators issue monotone proposal/effect identifiers;
  retirement watermarks advance before covered rows are deleted.
  \item[\textbf{A9}] \label{assumption:a9}\textbf{Verification objects.}
  Replay and inclusion are claimed only when their required objects,
  interpreters, keys, and certified-state proofs are available.
\end{enumerate}
Availability is not assumed. Receipt structural authenticity and kernel
attestation use A1--A3; independent replay additionally uses the relevant
A5--A6 objects and A9; durable inclusion additionally uses A4, A8, and A9.

\subsection{Executable Model and Artifact Specification}

The formal state-space exploration of Section~\ref{sec:evaluation} is provided as an executable Python artifact (\texttt{artifacts/bounded\_model.py}). The artifact imports only standard library modules (\texttt{collections}, \texttt{dataclasses}, \texttt{typing}) and executes an exhaustive breadth-first search (BFS) state exploration over the finite protocol state space.

The executable model verifies that across all 2,808,230 reachable states and 5,526,474 state-changing transitions:
(1)~no encoded invariant is violated;
(2)~every named protocol coverage witness is reached; and
(3)~all terminal dispositions, 17-stage check sequences, pre-state authority rules, writer fencing, and forward restoration postconditions hold.
Reproduction instructions and SHA-256 integrity digests are documented in \texttt{artifacts/README.md}.

\flushcolsend
\end{document}